\newcommand{\nnreals}{\ensuremath{\mathbb{R}_{\ge 0}}\xspace}
\newcommand{\nats}{\ensuremath{\mathbb{N}}\xspace}
\newcommand{\nnrats}{\ensuremath{\mathbb{Q}}\xspace}
\newcommand{\infwords}{\ensuremath{T\Sigma^\omega}\xspace}
\newcommand{\finwords}{\ensuremath{T\Sigma^*}\xspace}
\newcommand{\emptyword}{\ensuremath{\varepsilon}\xspace}
\newcommand{\conc}[1]{\ensuremath{\cdot_{#1}}\xspace}
\newcommand{\monitoring}[1]{\ensuremath{V_{#1}}\xspace}
\newcommand{\lang}{\textit{L}}
\newcommand{\true}{\texttt{true}}
\newcommand{\false}{\texttt{false}}
\newcommand{\aut}{\ensuremath{\mathcal{A}}\xspace}
\newcommand{\autp}{\ensuremath{\aut^{\prime}}\xspace}
\newcommand{\clocks}{\ensuremath{\mathcal{X}}\xspace}
\newcommand{\nestates}[1]{\ensuremath{S^{ne}_{#1}}\xspace}
\newcommand{\emptystates}[1]{\ensuremath{S^{\emptyset}_{#1}}\xspace}
\newcommand{\etal}[0]{et al.~}
\newcommand{\pspace}{\textsc{PSpace}\xspace}
\newcommand{\reachset}[1]{\mathcal{T}_{#1}}
\newcommand{\unknown}{\textbf{\textit{?}}}
\newcommand{\comp}[1]{\ensuremath{\overline{#1}}\xspace}
\newcommand{\prefixrel}{\sqsubseteq}
\newcommand{\myquot}[1]{``#1''}
\newcommand{\predmonitoring}[1]{{P}_{#1}}
\newcommand{\pospredmonitoring}[1]{{P}_{#1}^\top}
\newcommand{\negpredmonitoring}[1]{{P}_{#1}^\bot}
\newcommand{\shortsim}{%
  \settowidth{\@tempdima}{n}
  \resizebox{\@tempdima}{\height}{$\sim$}%
}
\newcommand{\Bsymbol}{\#}
\newcommand{\Dstate}{r}
\newcommand{\Estate}{s}
\newcommand{\Fstate}{t}
\newtheorem{remark}{Remark}
\newtheorem{lemma}{Lemma}
\newtheorem{theorem}{Theorem}
\newtheorem{proposition}{Proposition}
\newtheorem{example}{Example}
\newtheorem{definition}{Definition}
\newtheorem{corollary}{Corollary}
\title{Time for Timed Monitorability}
\author{
Thomas M.\ Grosen (Aalborg University, Aalborg, Denmark)\\
Sean Kauffman (Queen's University, Kingston, Canada)\\
Kim G.\ Larsen (Aalborg University, Aalborg, Denmark)\\
Martin Zimmermann (Aalborg University, Aalborg, Denmark)}
\date{}
\begin{document}

\maketitle

\begin{abstract}
Monitoring is an important part of the verification toolbox, in particular in situations where exhaustive verification using, e.g., model-checking is infeasible.
The goal of online monitoring is to determine the satisfaction or violation of a specification  during runtime, i.e., based on finite execution prefixes. 
However, not every specification is amenable to monitoring, e.g., properties for which no finite execution can witness satisfaction or violation. 
Monitorability is the question of whether a given specification is amenable to monitoring, and has been extensively studied in discrete time.

Here, we study the monitorability problem for real-time properties expressed as Timed Automata. For specifications given by deterministic Timed Muller Automata, we prove decidability while we show that the problem is undecidable for specifications given by nondeterministic Timed Büchi automata. 

Furthermore, we refine monitorability to also determine bounds on the number of events as well as the time that must pass before monitoring the property may yield an informative verdict.
We prove that for deterministic Timed Muller automata, such bounds can be effectively computed.  In contrast we show that for nondeterministic Timed Büchi automata such bounds are not computable.
\end{abstract}

\section{Introduction}

A fundamental challenge in Runtime Verification (RV) is that many properties provide no utility when monitored in an online setting.
Thus, it is of utmost importance to identify those properties that do provide useful information. 

Behaviors of long-running systems are typically specified as languages of infinite words, but online monitors only observe finite (prefixes of) system executions.
Thus, a monitor has to determine whether such a finite prefix already implies satisfaction or violation of the property. 
While many different types of monitors have been proposed, most online monitors return information in the form of \emph{verdicts} about the finite prefix.
Inherently to the problem, there are at least three verdicts~\cite{bauer2011runtime}: $\{\top,\bot,\unknown\}$, where $\top$ and $\bot$ are \emph{conclusive} verdicts that mean that the finite prefix guarantees that \emph{every} possible infinite extension satisfies, respectively violates, the property, and the \emph{inconclusive} verdict~$\unknown$ signifying that neither is the case. 

For example, consider an arbiter granting access to a shared resource. 
The property~$\varphi_1$ expressing \myquot{there is no request during the first second} is satisfied if the first request arrives after two seconds, no matter how the execution continues.
Hence, the verdict for such a finite execution is $\top$.
Similarly, the property~$\varphi_2$ expressing \myquot{after any request, there are no requests for at least one second} is violated as soon as two consecutive requests are observed within one second, no matter how the execution continues.
Hence, the verdict for such a finite execution is $\bot$. 
On the other hand, if requests arrive with a gap of two seconds between them, then the verdict for such a prefix (w.r.t.\ $\varphi_2$) is $\unknown$, since there are infinite extensions satisfying the property and infinite extensions violating it.

As seen above, there are prefixes for which we get a conclusive verdict w.r.t.\ $\varphi_2$.
This distinguishes it from properties like $\varphi_3$ expressing \myquot{every request is eventually granted}, for which every finite prefix can be extended to satisfy the property and can be extended to violate it.
Hence, every finite prefix yields the verdict~$\unknown$.
Phrased concisely: monitoring the property~$\varphi_3$ is futile.

However, for the user it is not transparent, while receiving the verdict~$\unknown$, whether in the future a conclusive verdict may be given, or whether every possible extension yields the verdict~$\unknown$.
The concept of \emph{monitorability} has been introduced to capture those properties that are amenable to monitoring.
It comes in two variants, strong monitorability (every prefix can be extended to one that yields a conclusive verdict) and weak monitorability (some prefix yields a conclusive verdict).
In this language, $\varphi_3$ is not weakly monitorable while $\varphi_1$ and $\varphi_2$ are (even strongly) monitorable. 
Thus, before constructing and deploying a monitor for a property, it is prudent to first check whether the property is monitorable. 

The problem of deciding if a property is monitorable has been studied extensively over the last 20 years (we discuss related work in Section~\ref{sec:related}), but only in the setting of discrete time until very recently. 
However, many properties require real-time constraints to express, e.g., deadlines like \myquot{every request is answered within 745 ms}.
In particular, safety-critical systems are nearly always real-time systems with physics-based deadlines, and these systems tend to benefit the most from formal verification methods like RV.
Had the property~\myquot{the Therac~25 control program must wait eight seconds before switching between X-ray and electron modes} been monitored, it might have saved lives~\cite{leveson1993therac}.
While monitoring algorithms exist for real-time properties~\cite{thati2005monitoring,bauer2006monitoring,basin2012algorithms,DBLP:conf/formats/GrosenKLZ22}, the problem of real-time monitorability has gone largely unexamined.

\subparagraph{Our Contribution}
This work makes monitoring of real-time systems more useful by examining the monitorability problem for real-time properties and by introducing qualitative refinements of the verdict~$\unknown$.
We consider real-time properties expressed by Timed Automata~(TA) over infinite words~\cite{alur1994tba}.
Nondeterministic Timed Büchi Automata (TBA) are used for model checking tools like \textsc{Uppaal}~\cite{larsen2004online} and for monitoring temporal logics~\cite{DBLP:conf/formats/GrosenKLZ22,delay,assumptions} and are strictly more expressive than deterministic Timed Muller Automata (DTMA).
We prove that strong and weak monitorability are undecidable for nondeterministic TBA, but decidable for DTMA.
Thus, one can algorithmically determine that it is futile to monitor properties like $\varphi_3$, thereby increasing the applicability of monitoring of real-time systems.

Furthermore, we introduce monitorability with step-bounded horizons that strengthens monitorability by limiting the number of events in a timed-word before a conclusive verdict must be reached.
A step-bounded horizon allows one to determine for a given property and $n \in \nats$, if a conclusive verdict is possible within $n$ steps, enabling corrective actions earlier.
Again, we show that monitorability with step-bounded horizons is undecidable for nondeterministic TBA, but decidable for DTMA.

Finally, we refine monitoring of real-time properties with time-horizon verdicts.
Here, the  verdict~$\unknown$ is enhanced with information about the minimum time until a conclusive verdict may be reached.
Intuitively, before this time is reached, the monitor will only yield the inconclusive verdict~$\unknown$, i.e., no information can be gained from querying the monitor before.
This notion was introduced by Grosen et al.~\cite{DBLP:conf/formats/GrosenKLZ22} as ``time-predictive'' monitoring.
Here, we formally prove that time-horizon queries can be computed effectively for DTMA.

Thus, our results highlight the importance of properties being given by deterministic timed automata when checking monitorability. This is in contrast to monitoring itself, where it suffices to have nondeterministic automata for the property \emph{and} its negation~\cite{DBLP:conf/formats/GrosenKLZ22}, which is, e.g., the case when specifying properties in Metric Interval Temporal Logic~\cite{alur1996mitl}. 
Finally, we show that it is necessary to have automata for the property and the complement, as the monitoring function is otherwise not effectively computable.

\section{Preliminaries}
\label{sec:prelims}

The nonnegative integers are denoted by $\nats$ and the nonnegative reals by $\nnreals$. An alphabet is a finite nonempty set of letters.

A timed word is a pair~$\rho= (\sigma, \tau)$, where $\sigma$ is a (finite or infinite) word over an alphabet~$\Sigma$ and $\tau$ is a sequence of non-decreasing, non-negative real numbers of the same length as $\sigma$. 
For convenience, we often write $(\sigma_1, \tau_1) (\sigma_2, \tau_2) \cdots $ for a timed word~$(\sigma, \tau)$.
$\finwords$ and $\infwords$ denote the sets of finite and infinite timed words over $\Sigma$. For $n\in\nats\cup\{\infty\}$ we denote by $T\Sigma^{\leq n}$ the set of timed words over $\Sigma$ of  length at most $n$.
Given a finite word $\rho = (\sigma_1, \tau_1) (\sigma_2, \tau_2) \cdots   (\sigma_n, \tau_n)$, we denote its duration as $\tau(\rho) = \tau_n$. 
Slightly abusively, we write $\emptyword$ for the empty timed word~$(\emptyword, \emptyword)$ and define $\tau(\emptyword) = 0$.
For a finite timed word~$\rho = (\sigma_1, \tau_1) (\sigma_2, \tau_2) \cdots  (\sigma_n, \tau_n)$, a finite or infinite word~$\rho' = (\sigma_1', \tau_1') (\sigma_2', \tau_2')\cdots $ and a timepoint~$t \ge \tau(\rho)$, we define the concatenation of $\rho$ and $\rho'$ at $t$ as 
\[\rho \conc{t} \rho' = (\sigma_1, \tau_1) (\sigma_2, \tau_2)\cdots  (\sigma_n, \tau_n) (\sigma_1', \tau_1' + t) (\sigma_2', \tau_2' + t) \cdots \]
which is a timed word.
As a shorthand, we write $\rho \cdot{} \rho'$ for $\rho \conc{\tau(\rho)} \rho'$.
Given two finite words~$\rho,\rho'$ and a timepoint~$t \ge \tau(\rho)$, we write $\rho \prefixrel_t \rho'$ if there exists a $\rho''$ such that $\rho \cdot_t \rho'' = \rho'$.

\begin{example}
    Consider the words $\rho = (a, 0) (b, 10)$ and $\rho' = (c, 5) (d, 15)$. We have
    \begin{itemize}
        \item $\rho \conc{} \rho' = \rho \conc{10} \rho' = (a, 0) (b, 10) (c, 15) (d, 25)$,
        \item $\rho \conc{15} \rho' = (a, 0) (b, 10) (c, 20) (d, 30)$, and
        \item $\rho' \conc{} \rho = \rho' \conc{15} \rho = (c, 5) (d, 15) (a, 15) (b, 25)$.
    \end{itemize}
\end{example}

\subparagraph*{Timed Automata}
A timed Büchi automaton (TBA) is a tuple $\aut = (Q, Q_0, \Sigma, \clocks, \Delta, F)$ where $Q$ is a finite set of locations, $Q_0 \subseteq Q$ is the set of initial locations, $\Sigma$ is an alphabet, \clocks is a finite set of clocks, $F \subseteq Q$ is a set of accepting locations, and $\Delta \subseteq Q \times Q \times \Sigma \times 2^\clocks \times G(\clocks)$ is a set of transitions, where $G(\clocks)$ is the set of clock constraints over \clocks.
A transition $(q, q', \alpha, \lambda, g) \in \Delta$ is an edge from $q$ to $q'$ with label $\alpha \in \Sigma$, where $\lambda \in 2^\clocks$ is a set of clocks to be reset and $g \in G(\clocks)$ is a clock constraint.
A clock constraint is a finite conjunction of atomic constraints of the form $x \sim n$ where $x \in \clocks$, $n \in \nats$, and $\sim\, \in \{<, \le, =, \ge, >\}$.
A state of $\aut$ is a pair $(q, v)$ where $q \in Q$ and $v \colon \clocks \rightarrow \nnreals$ is a clock valuation.

A run of $\aut$ from a state $(q_0, v_0)$ over an infinite word
$(\sigma, \tau) \in \infwords$ is a sequence of steps of the form
\[(q_0, v_0) \xrightarrow{(\sigma_1, \tau_1)} (q_1, v_1) \xrightarrow{(\sigma_2,\tau_2)} (q_2,v_2) \xrightarrow{(\sigma_3,\tau_3)} \cdots\]
where for every $i \ge 1$, there is a transition~$(q_{i-1}, q_i, \sigma_i, \lambda_i, g_i) \in \Delta$ such that $v_i(x) = 0$ for all $x \in \lambda_i$ and $v_i(x) = v_{i-1}(x) + \tau_i - \tau_{i-1}$ otherwise, and $g$ is satisfied by $v_{i-1} + \tau_i - \tau_{i-1}$ (where we use $\tau_0 = 0$).
For a run \textit{r}, $\textit{inf}(r) \subseteq Q$ denotes the set of locations visited infinitely often in $r$. 
A run $r$ is (Büchi) accepting if $\textit{inf}(r) \cap F \ne \emptyset$.

A timed Muller automaton (TMA) $\aut = (Q, Q_0, \Sigma, \clocks, \Delta, \mathcal{F})$ is like a TBA, but the set~$F$ of accepting locations is replaced by a set~$\mathcal{F} \subseteq 2^Q$ of sets of locations.
A run $r$ of $\aut$ is (Muller) accepting if $\textit{inf}(r) \in \mathcal{F}$.

The language~$L(\aut)$ of a timed (Büchi or Muller) automaton is the set of words~$\rho \in \infwords$ such that $\aut$ has an accepting run over $\rho$.

An automaton is deterministic if the set of initial locations is a singleton and if all edges from the same location and with the same label must have disjoint clock constraints. We use the abbreviations DTBA and DTMA to refer to the two deterministic automaton models.

\begin{proposition}
The following results on TBA and TMA are due to Alur and Dill~\cite{alur1994tba}.
\begin{enumerate}
    \item Let $\aut = (Q, Q_0, \Sigma, \clocks, \Delta, F)$ be a TBA and let $\aut' = (Q, Q_0, \Sigma, \clocks, \Delta, \{F' \mid F \cap F' \neq \emptyset\})$, which is a TMA. Then, $L(\aut) = L(\aut')$. Furthermore, if $\aut$ is deterministic, then so is $\aut'$.
    \item For every TMA~$\aut = (Q, Q_0, \Sigma, \clocks, \Delta, \mathcal{F})$, there is a TBA~$\aut'$ with $L(\aut) = L(\aut')$. The set of locations of $\aut'$ has the form~$Q \times \mathcal{F} \times \{0, 1, \ldots, |Q|\}$.
    \item The class of languages accepted by DTMA is a strict subset of the class of languages accepted by TBA.
\end{enumerate}
\end{proposition}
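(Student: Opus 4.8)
The plan is to treat the three parts separately, since they are essentially independent.

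For the first part I would observe that $\aut$ and $\autp$ share the components $Q$, $Q_0$, $\Sigma$, $\clocks$, and $\Delta$, and therefore have exactly the same runs over any word $\rho \in \infwords$. Hence it suffices to compare acceptance of a single run $r$. By definition $r$ is Büchi-accepting for $\aut$ iff $\textit{inf}(r) \cap F \neq \emptyset$; since the Muller condition of $\autp$ is $\mathcal{F} = \{F' \mid F \cap F' \neq \emptyset\}$, this holds iff $\textit{inf}(r) \in \mathcal{F}$, which is precisely Muller-acceptance. Thus the accepting runs coincide and $L(\aut) = L(\autp)$. Determinism is preserved because $Q_0$ and $\Delta$ are untouched.

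For the second part I would give the standard counter-based simulation of a Muller condition by a Büchi condition, which modifies only the acceptance bookkeeping and leaves the timed transition structure intact. A location of $\autp$ is a triple $(q, F', i)$, where $q$ tracks the run of $\aut$, the guessed set $F' \in \mathcal{F}$ encodes the claim $\textit{inf}(r) = F'$, and the counter $i \in \{0, \ldots, |Q|\}$ sweeps through a fixed enumeration of $F'$. Initially $i = 0$, a transient phase in which transitions may leave $F'$; at a nondeterministically chosen moment the automaton commits, after which only transitions staying inside $F'$ are permitted and the counter advances each time the next enumerated element of $F'$ is seen. Completing a full sweep resets the counter and marks the location accepting. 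Such accepting locations recur infinitely often iff, from some point on, the run stays within $F'$ and visits every element of $F'$ infinitely often, i.e. iff $\textit{inf}(r) = F'$; ranging the guess over all of $\mathcal{F}$ yields $L(\autp) = L(\aut)$, and $|F'| \le |Q|$ bounds the counter, giving the claimed location set. The point needing care is to enforce both directions of $\textit{inf}(r) = F'$: the commitment phase forbids visiting locations outside $F'$ infinitely often, while the sweeping counter forces each location of $F'$ to recur.

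For the third part the inclusion is immediate, since every DTMA is a TMA and its language is therefore a TBA language by the second part. For strictness I would exhibit a separating language, namely the Alur--Dill witness $L = \{(\sigma, \tau) \in T\{a\}^\omega \mid \exists i < j.\ \tau_j - \tau_i = 1\}$ of timed words containing two events exactly one time unit apart. A TBA recognizes $L$ by nondeterministically selecting the earlier event, resetting a clock there, and accepting once a later event satisfies the guard $x = 1$. The hard part, and the crux of the whole proposition, is the lower bound: no deterministic timed automaton, regardless of acceptance condition and in particular no DTMA, recognizes $L$. This follows from the Alur--Dill argument that a deterministic automaton has a fixed finite number of clocks and so cannot simultaneously remember the deadlines $\tau_i + 1$ induced by arbitrarily many events placed in a bounded interval; a region/pumping argument then conflates two prefixes differing only in whether a future event completes a distance-one pair. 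Since this obstruction concerns the timing rather than the $\omega$-acceptance structure, strengthening deterministic Büchi to deterministic Muller does not help, so $L$ separates the two classes.
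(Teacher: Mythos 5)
Your proposal is correct and follows the standard Alur--Dill arguments; note that the paper itself states this proposition as a citation and gives no proof, so there is nothing to diverge from. Part~1 is the expected one-line observation that the two automata share all runs; Part~2 is the standard guess-commit-and-sweep simulation of a Muller condition by a B\"uchi condition, and your description (with $i=0$ as the uncommitted phase) is consistent with the fragment of the construction the paper does use later, namely the fact $\lang(\aut_F', ((q,0),v)) = \lang(\aut_F,(q,v))$ in the proof of Lemma~\ref{lemma:tmaalgorithms}; Part~3 uses the canonical separating language (two events exactly one time unit apart), where, as you rightly flag, the only genuinely nontrivial content is the Alur--Dill lower bound showing no deterministic timed automaton, under any acceptance condition, can recognize it --- your sketch of that argument is at an appropriate level of detail for a cited result, though a self-contained proof would need to make the region/pumping step precise.
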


\subparagraph*{Logic}
We use Metric Interval Temporal Logic (MITL) to formally express properties to be monitored.
The syntax of MITL formulas over a finite alphabet~$\Sigma$ is defined as
\[
\varphi ::= p \mid  \neg \varphi \mid  \varphi \vee \varphi \mid  X_I \varphi \mid  \varphi\ U_{I} \varphi\,
\]
where $p \in \Sigma$ and $I$ ranges over non-singular intervals over $\nnreals$ with endpoints in ${\nats \cup \{\infty\}}$.
Note that we often write $\sim\,n$ for $I=\{d\in\nnreals\mid d\sim n\}$ where $\sim\ \in \{<,\leq,\geq,>\}$, and $n \in \nats$.
  We also define the standard syntactic sugar
$\true = p \vee \neg p$,
$\false = \neg \true$,
$\varphi \wedge \psi = \neg (\neg \varphi \vee \neg \psi)$,
$\varphi \rightarrow \psi = \neg \varphi \vee \psi$,
$F_{I} \varphi = \true\ U_{I} \varphi$, 
and $G_{I} \varphi = \neg F_{I} \neg \varphi$.

The semantics of MITL is defined over infinite timed words.
Given such a timed word~$\rho = (\sigma_1,\tau_1)(\sigma_2,\tau_2)\cdots  \in \infwords$, a position $i \geq 1$, and an MITL formula~$\varphi$, we inductively define the satisfaction relation $\rho, i \models \varphi$ as  follows:
\begin{itemize}
    \item $\rho,i \models p $  if and only if $p = \sigma_i$.
    \item $\rho,i \models \neg \varphi$ if and only if $\rho,i \not\models \varphi$.
    \item $\rho,i \models \varphi \vee \psi$  if $\rho,i \models \varphi$ or $\rho,i \models \psi$.
    \item $\rho,i \models X_I \varphi$ if and only if $\rho,(i+1) \models \varphi$ and $\tau_{i+1} - \tau_i \in I$.
    \item $\rho,i \models \varphi\ U_{I} \psi$ if and only if  there exists $k \geq i$ s.t.\ $\rho,k \models \psi$, $\tau_k - \tau_{i} \in I$, and $\rho,j \models \varphi$ for all $i \leq j < k$. 
\end{itemize}
  We write $\rho \models \varphi$ whenever $\rho, 1 \models \varphi$, and say that $\rho$ satisfies $\varphi$.
The language $\lang(\varphi)$ of an MITL formula~$\varphi$ is the set of all infinite timed words that satisfy~$\varphi$.

\begin{proposition}[\cite{alur1996mitl,brihaye2017mightyl}]
\label{prop:mitl2tba}
For each MITL formula $\varphi$ there is a TBA $\aut$ such that $\lang(\varphi) = \lang(\aut)$.
\end{proposition}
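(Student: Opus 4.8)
The plan is to proceed by structural induction on~$\varphi$, building a TBA for each subformula compositionally and closing under the automata-theoretic operations that TBA support: union, intersection, and existential projection over an enriched alphabet. Since TBA are not closed under complement, I would first push negations down to the atomic propositions by extending the grammar with the duals of $X_I$ and $U_I$ — a dual-next operator and a release operator~$R_I$ — so that every formula has an equivalent negation normal form in which $\neg$ occurs only in front of a letter~$p \in \Sigma$. It then suffices to give a direct construction for each operator in positive position.

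The combination is cleanest in the \emph{temporal-tester} style: I would work over the alphabet~$\Sigma$ enriched with one Boolean annotation per subformula, and for each subformula build a component TBA that checks \emph{local consistency} of its own annotation bit against the bits of its immediate subformulas and the timing of the word. The base cases are routine: a letter~$p$ (and its negation) reads the current position, while a Boolean connective only compares annotation bits. The operator~$X_I\varphi$ is handled with a single fresh clock reset at the current position: its component asserts its bit at position~$i$ exactly when $\tau_{i+1}-\tau_i \in I$ and the $\varphi$-bit holds at position~$i+1$. The several components are then intersected — using that TBA are effectively closed under intersection via a product with a round-robin Büchi condition.

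The heart of the construction, and the step I expect to be the main obstacle, is the bounded until~$\varphi\, U_I\, \psi$ with a non-singular interval~$I$. The difficulty is that at a given position several \emph{obligation windows} — each demanding that $\psi$ become true within a time-shifted copy of~$I$ while $\varphi$ holds throughout — may be simultaneously active, and a naive construction would need unboundedly many clocks to time them. The crucial observation is that non-singularity of~$I$ bounds, in terms of the granularity of~$I$, how many such windows can be pairwise distinguishable at any instant: windows that open within a sufficiently small time of one another impose essentially the same constraint and can be merged. I would therefore maintain a bounded pool of clocks, each tracking the age of one active obligation, opening a clock when a new $\psi$-obligation arises, freeing a clock once its obligation is met inside~$I$, and guarding against any clock exceeding the upper endpoint of~$I$ undischarged. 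Making the merging argument precise — defining when two windows coincide and proving the pool size stays bounded — is the technically delicate part, and is exactly where MITL's restriction to non-singular intervals is essential (full MTL allows singular intervals and has undecidable, hence non-TBA-representable, satisfiability).

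Finally, I would assemble everything: all component automata run in parallel over the enriched alphabet, their intersection accepts exactly the annotated words whose annotations are consistent with the MITL semantics, and the Büchi condition enforces the liveness part of every unbounded until (that a pending $\psi$-obligation is eventually discharged rather than postponed forever). Projecting away the annotation bits — an existential projection that TBA support while remaining nondeterministic — yields a single TBA~$\aut$ over~$\Sigma$ with $\lang(\aut) = \lang(\varphi)$, which completes the induction.
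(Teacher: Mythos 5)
The paper offers no proof of this proposition---it is quoted from Alur, Feder, and Henzinger~\cite{alur1996mitl} and from Brihaye et al.~\cite{brihaye2017mightyl}---and your sketch is essentially the construction those references give: a compositional, tester-style translation over an alphabet annotated with one bit per subformula, with the non-singularity of the intervals used exactly as you describe to bound the number of clocks needed to track overlapping until-obligations before intersecting the components and projecting the annotations away. You correctly identify the obligation-merging argument as the technical crux (and correctly locate it as the point where MITL differs from full MTL), so your proposal matches the cited argument and contains no gap.
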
  

\begin{example}\label{ex:mitl2automaton}
Figure~\ref{fig:aut-example} illustrates the above proposition by providing a DTBA over the alphabet~$\Sigma = \{a,b,c\}$ for the formula~$F_{[0,10]}a \land G_{[0,20]}\neg b$ and its negation.
\end{example}

\begin{figure}[ht]
    \centering
    \begin{tikzpicture} [node distance = 2cm,thick]
    \node (q0)     [state, initial text={}, minimum size =15]          {};
    \node (i)   at (.8,0.5) {};
    \node (q1)     [state, right = of q0, minimum size =15]    {};
    \node (phi)    [state, right = of q1, minimum size =15]    {$\varphi$};
    \node (notphi) [state, left = of q0, minimum size =15, inner sep =1]    {$\neg\varphi$};
    
    \draw[->, > = stealth, thick] (i) edge (q0);
     \draw[->, > = stealth, thick] (q0) edge node[above] {$a$} node[below]{$x \leq 10$} (q1);
     \draw[->, > = stealth, thick] (q1) edge node[above] {$a, b,c$} node[below]{$x > 20$} (phi);

     \draw[->, > = stealth, thick] (q0) edge[bend right=50] node[above]{$b$} (notphi);
     \draw[->, > = stealth, thick] (q0) edge[] node[above]{$a,c$} node[below]{$x > 10$} (notphi);

     \draw[->, > = stealth, thick] (q1) edge[bend left=40] node[above]{$b$}  node[below]{$x\leq20$} (notphi);
    
    \draw[->, > = stealth, thick] (phi) edge [loop below] node[] {$a,b,c$} (phi);
    \draw[->, > = stealth, thick] (q1) edge [loop above] node[above,align=center] {$a,c$\\$x \leq 20$} (q1);
    \draw[->, > = stealth, thick] (notphi) edge[loop below] node[]{$a,b,c$} (notphi);

    \draw[->, > = stealth, thick] (q0) edge[loop above] node[above,align=center]{$c$\\$x \leq 10$} (q0);

\end{tikzpicture}
    \caption{DTBA for the language of the MITL formula~$\varphi = F_{[0,10]}a \land G_{[0,20]}\neg b$ and its negation: If location~$\varphi$ ($\neg \varphi$) is accepting then it accepts $L(\varphi)$ ($L(\neg \varphi)$, respectively).} 
    \label{fig:aut-example}
\end{figure}

\subparagraph*{Monitoring}
The monitoring problem asks to make verdicts about the satisfaction or violation of properties (over infinite timed words) after having observed only a finite prefix. 
Here, we follow the classical approach of considering the possible extensions of a finite observations.

\begin{definition}[Observation]
    An observation is a pair~$(\rho, t)$ containing a finite timed word~$\rho$ and a timepoint~$t \ge \tau(\rho)$, representing the current timepoint (which might be later than the last observed event in $\rho$). As we use observations as inputs for algorithms, we require that all timepoints in $\rho$ and $t$ are rational.
\end{definition}

We continue with giving some intuition for the three-valued monitoring approach.
Here, one is given an observation and aims to determine whether a property~$\varphi$ (of infinite timed words) is already satisfied, already violated, or neither.
\begin{itemize}
    \item If all possible extensions of the observation satisfy $\varphi$, then we give the corresponding verdict~$\top$ signifying that the observation conclusively witnesses satisfaction of $\varphi$.
    \item If all possible extensions of the observation violate $\varphi$, then we give the corresponding verdict~$\bot$ signifying that the observation conclusively witnesses violation of $\varphi$.
    \item Otherwise, i.e., if there is an extension of the observation that satisfies $\varphi$ and there is a extension of the observation that violates $\varphi$, then we give the inconclusive verdict~$\unknown$.
\end{itemize}

Let us formalize this intuition.

\begin{definition}[Timed Monitoring Function]
Given a property~$\varphi \subseteq \infwords$ and an observation~$(\rho, t)$, the monitoring function~\monitoring{\varphi} is defined as
    \[
    \monitoring{\varphi}(\rho, t) = \begin{cases}
        \top & \textit{ if }  \rho \conc{t} \mu \in \varphi \text{ for all } \mu \in \infwords,\\
        \bot & \textit{ if }  \rho \conc{t} \mu \not\in \varphi \text{ for all } \mu \in \infwords,\\
        \unknown & \textit{ otherwise.}
    \end{cases}
    \]
In the following, we use $\monitoring{\varphi}(\rho)$ as a shorthand for $\monitoring{\varphi}(\rho, \tau(\rho))$.
\end{definition}

\begin{example}
\label{example_monitoring}
Consider the specification~$\varphi  = F_{[0,10]}a \land G_{[0,20]}\neg b$ from Example~\ref{ex:mitl2automaton}. We have
\begin{itemize}
    \item $\monitoring{\varphi}((a,3), 4) = \unknown$,
    \item $\monitoring{\varphi}((a,11), 11) = \bot$,
    \item $\monitoring{\varphi}((a,3)(c, 7), 13) = \unknown$,  
    \item $\monitoring{\varphi}((a,3)(c, 7)(c,22), 22) = \top$, but
    \item $\monitoring{\varphi}((a,3)(c, 7)(b,12), 12) = \bot$. Also,
    \item $\monitoring{\varphi}((a,3)(c, 7), 22) = \top$ while 
    \item $\monitoring{\varphi}((a,3)(c, 7)) = \monitoring{\varphi}((a,3)(c, 7), 7) = \unknown$, i.e., the current timepoint~$t$ can yield conclusive verdicts when time is passing, even if no new events are observed.
\end{itemize}
\end{example}

\begin{proposition}[Effectiveness of Timed Monitoring \cite{DBLP:conf/formats/GrosenKLZ22}]
\monitoring{\varphi} is effectively computable by a zone-based online algorithm\footnote{See Page~\pageref{def:zones} for a formal definition of zones.} that requires TBA for \emph{both} $\varphi$ and $\infwords \setminus \varphi$.
\end{proposition}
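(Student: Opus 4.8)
The plan is to reduce the computation of the verdict $\monitoring{\varphi}(\rho,t)$ to two nonemptiness queries and to solve each query by a zone-based reachability analysis followed by a Büchi acceptance check. First I would rewrite the three cases of the definition by negating the universal quantifiers: $\monitoring{\varphi}(\rho,t)=\top$ holds exactly when there is \emph{no} $\mu\in\infwords$ with $\rho\conc{t}\mu\in\comp{\varphi}$, and $\monitoring{\varphi}(\rho,t)=\bot$ holds exactly when there is \emph{no} $\mu\in\infwords$ with $\rho\conc{t}\mu\in\varphi$; the remaining case yields $\unknown$. Since every extension either satisfies or violates $\varphi$, at most one of these two ``no extension'' conditions can hold, so the verdict is determined by the two Boolean values $R_\varphi = [\exists\mu\colon\rho\conc{t}\mu\in\varphi]$ and $R_{\comp{\varphi}} = [\exists\mu\colon\rho\conc{t}\mu\in\comp{\varphi}]$. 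This is precisely the point at which a TBA for \emph{both} $\varphi$ and $\comp{\varphi}=\infwords\setminus\varphi$ is needed: deciding $\top$ requires witnessing the absence of a violating extension, and since TBA are not effectively closed under complementation, the complement automaton must be supplied as input rather than constructed.

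Next, fix one of the two automata, say a TBA $\aut=(Q,Q_0,\Sigma,\clocks,\Delta,F)$ for the language $L\in\{\varphi,\comp{\varphi}\}$, and consider the query $\exists\mu\in\infwords\colon\rho\conc{t}\mu\in L$. I would process the observation online: starting from the symbolic states $(q_0,Z_0)$ with $q_0\in Q_0$ and $Z_0=\{\mathbf{0}\}$, I read the events of $\rho=(\sigma_1,\tau_1)\cdots(\sigma_n,\tau_n)$ one at a time, for each event applying the delay $\tau_i-\tau_{i-1}$ (with $\tau_0=0$), intersecting with the guards, and applying the resets of every matching transition, and finally letting time elapse by $t-\tau(\rho)$ to obtain the configuration at the current timepoint $t$. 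Because $\rho$ is a fixed word with rational timestamps and the guards of $\aut$ have integer bounds, each delay is a concrete rational, and the reachable set is an \emph{exact} finite union of zones $\reachset{}=\{(q_1,Z_1),\dots,(q_m,Z_m)\}$ (in fact each surviving run fixes a single rational valuation, so the $Z_j$ are points); zones are closed under exactly the operations delay, guard intersection, and reset, so the computation stays within the zone representation.

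From $\reachset{}$ the remaining task is to decide whether $\aut$ admits a Büchi-accepting run over some infinite continuation starting at time $t$, i.e.\ Büchi nonemptiness from the set of symbolic states $\reachset{}$. I would settle this by continuing the symbolic exploration into the future: from each $(q_j,Z_j)$ one explores all discrete transitions interleaved with time elapse, applying an extrapolation operator to the zones so that only finitely many symbolic states arise, and then searches the resulting finite zone graph for a reachable cycle visiting an accepting location in $F$. Correctness rests on the Alur--Dill region construction, which makes such a symbolic search sound and complete for Büchi nonemptiness; the zone graph with a suitable extrapolation is a standard, region-faithful optimization of it. Running this check on both automata yields $R_\varphi$ and $R_{\comp{\varphi}}$, from which the verdict is read off: $\top$ if $\neg R_{\comp{\varphi}}$, else $\bot$ if $\neg R_\varphi$, else $\unknown$.

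I expect the main obstacle to be the future Büchi-nonemptiness step rather than the reachability for the prefix. Whereas the prefix is consumed exactly---the rationality assumption guarantees that no precision is lost and the reachable set stays finite---deciding the existence of an \emph{infinite} accepting continuation over continuous time requires a finite abstraction that simultaneously terminates and preserves accepting cycles. One must therefore argue that the chosen extrapolation operator neither spuriously creates nor destroys accepting lassos, which is exactly where a naive zone graph can be unsound for liveness and where the region automaton (or a liveness-safe extrapolation) has to be invoked. A secondary point to verify is the time-elapse to $t$: since $\mu$ is placed at times $\ge t$ with its first event allowed at delay $0$, the configuration at $t$ must be obtained by a pure time-elapse of $t-\tau(\rho)$ before any further transition, and one must check that this matches the semantics of $\rho\conc{t}\mu$.
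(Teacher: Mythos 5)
Your proposal is correct and follows essentially the same route the paper (via the cited work) takes: compute the zone-based reach set of the observation, then reduce the $\top$/$\bot$ verdicts to Büchi nonemptiness from the reached states in the automata for $\varphi$ and $\infwords\setminus\varphi$ respectively, which is exactly why both automata are required. The only cosmetic difference is that the paper precomputes the set of non-empty language states $\nestates{\aut}$ offline (Proposition~\ref{lemma:algorithms}) and reduces the online step to a containment test of the reach set, whereas you perform the nonemptiness search forward from the reach set at query time; your caveat about liveness-safe extrapolation is the right one and is what the precomputation of $\nestates{\aut}$ encapsulates.
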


Recall that TBA are \emph{not} closed under complement~\cite{alur1994tba}, so this result is not applicable to all properties accepted by TBA.
However, for the important special case of properties~$\varphi$ specified in MITL, \monitoring{\varphi} is effectively computable, as MITL properties are closed under complementation (as the logic allows for negations) and MITL formulas can be translated into equivalent TBA (see Proposition~\ref{prop:mitl2tba}).
Similarly, when $\varphi$ is given by a DTMA, then \monitoring{\varphi} is effectively computable, as DTMA are closed under complementation and can be turned into equivalent TBA~\cite{alur1994tba}.

However, it was previously open whether \monitoring{\varphi} was computable if $\varphi$ was given by a non-deterministic automaton, but we did not have access to an automaton for the complement~$\infwords \setminus \varphi$.
In Section~\ref{sec:tba}, we answer the question negatively.

\subparagraph*{Monitorability}

Not every property is amenable to monitoring, e.g., for $\varphi = G_{\ge 0}F_{\ge 0}a$, we have $\monitoring{\varphi}(\rho,t)=\unknown$ for every observation~$(\rho,t)$.
The reason is that every $\rho$ can be extended to satisfy $\varphi$ and can be extended to violate $\varphi$.
In the untimed setting, much effort has been put into characterizing the monitorable properties, i.e., those for which monitoring can generate some information.
Here, we consider monitorability in the timed setting.

\begin{definition}[Timed Monitorability]
\label{def:monitorability}
Fix an observation~$(\rho,t)$ and a property~$\varphi \subseteq \infwords$. 
\begin{itemize}

    \item $\varphi$ is strongly $(\rho,t)$-monitorable if and only if 
    \[
    \text{for all } \rho' \in \finwords  \text{ there exists } \rho'' \in \finwords \text{ such that } \monitoring{\varphi}(\rho \conc{t} \rho' \cdot \rho'') \in \{\top, \bot\}.
    \]

    \item $\varphi$ is weakly $(\rho,t)$-monitorable if and only if
\[
    \text{there exists }\rho'' \in \finwords \text{ such that } \monitoring{\varphi}(\rho \conc{t} \rho'') \in \{\top,\bot\}.
  \]
      \item $\varphi$ is strongly monitorable if it is strongly $(\varepsilon,0)$-monitorable.

    \item $\varphi$ is weakly monitorable if it is weakly $(\varepsilon,0)$-monitorable.

\end{itemize}
\end{definition}

\begin{example}\hfill

\begin{enumerate}
    \item Consider the property~$\varphi_1 = F_{\ge 0} a$.
    For every observation~$(\rho, t)$, we have $\monitoring{\varphi_1}(\rho \conc{t} (a,0),t)  = \top$.
    Hence, $\varphi_1$ is strongly monitorable.

    \item Now, consider $\varphi_2 = a \rightarrow G_{\ge 0}F_{\ge 0} a$. 
    Then, we have $\monitoring{\varphi_2}((b, 0),0) = \top$, as every extension of $((b, 0),0)$ satisfies $\varphi_2$ (as the premise is violated).
    Hence, $\varphi_2$ is weakly monitorable.

    However, it is not strongly monitorable: Consider the observation~$((a, 0),0)$, for which every extension satisfies the premise. We have $(a, 0)\conc{0} \rho''\conc{} (a,0)(a,1)(a,2)\cdots  \models \varphi_2$ and $(a, 0)\conc{0} \rho''\conc{} (b,0)(b,1)(b,2)\cdots  \not\models \varphi_2$.
    Hence, $\monitoring{\varphi_2}((a, 0)\conc{0} \rho'')=\unknown$ for all $\rho''$.

    \item Now, consider $\varphi_3 = G_{\ge 0}F_{\ge 0} a$. Arguments as for $\varphi_2$ show that it is neither strongly nor weakly monitorable, as every finite word can be extended by $(a,0)(a,1)(a,2)\cdots $ to satisfy $\varphi_3$ and can be extended by $(b,0)(b,1)(b,2)\cdots $ to violate $\varphi_3$.
        
\end{enumerate}
\end{example}

\begin{remark}
The astute reader might wonder why we quantify only over words~$\rho'$ (and $\rho''$) in Definition~\ref{def:monitorability} and not over words and timepoints to concatenate at. 
The reason is that both definitions are equivalent, as $\rho_1 \cdot_t \rho_2$, for finite words $\rho_1$ and $\rho_2$ and a timepoint~$t \ge \tau(\rho)$, is equal to $\rho_1 \cdot \rho_2'$, where $\rho_2'$ is obtained from $\rho_2$ by incrementing all its timepoints by $t - \tau(\rho_1)$.
\end{remark}

Let us continue by collecting some simple consequences of Definition~\ref{def:monitorability}.

\begin{remark}
\label{rem_properties}
Let $(\rho,t)$ and $(\rho',t')$ be two observations with $\rho \prefixrel_t \rho'$, and let $\varphi \subseteq \infwords$.
\begin{enumerate}
    \item If $\varphi$ is strongly $(\rho, t)$-monitorable, then it is also weakly $(\rho, t)$-monitorable. Thus, if $\varphi$ is strongly monitorable, then it is also weakly monitorable.
    \item If $\varphi$ is strongly $(\rho,t)$-monitorable, then it is also strongly $(\rho',t')$-monitorable.
    \item If $\varphi$ is not weakly $(\rho,t)$-monitorable, then it is also not weakly $(\rho',t')$-monitorable.     
\end{enumerate}
\end{remark}

In the following we study the decidability of monitorability, i.e., we consider the following decision problems where properties are given by timed automata:
\begin{enumerate}
    \item\label{decprop_strong} Given a property~$\varphi$, is $\varphi$ strongly monitorable?
    \item\label{decprop_weak} Given a property~$\varphi$, is $\varphi$ weakly monitorable?
    \item\label{decprop_strongvar} Given a property~$\varphi$ and an observation~$(\rho,t)$, is $\varphi$ strongly $(\rho,t)$-monitorable?
    \item\label{decprop_weakvar} Given a property~$\varphi$ and an observation~$(\rho,t)$, is $\varphi$ weakly $(\rho,t)$-monitorable?
\end{enumerate}

By definition, if Problem~\ref{decprop_strongvar} is decidable for a class of properties, then Problem~\ref{decprop_strong} is also decidable for the same class of properties.
Hence, if Problem~\ref{decprop_strong} is undecidable for a class of properties, then Problem~\ref{decprop_strongvar} is also undecidable for the same class of properties.
A similar relation holds between Problem~\ref{decprop_weak} and Problem~\ref{decprop_weakvar}.

\section{Monitoring and Monitorability for TBA}
\label{sec:tba}

In this section, we prove that the monitoring function is not computable and that the strong types of monitorability are undecidable when the property is given by a TBA.
Note that this shows that the positive results for monitoring in the literature~\cite{DBLP:conf/formats/GrosenKLZ22,delay,assumptions}, which require automata both for the property and its complement, are tight in that sense: Only giving an automaton for the property, but not for its complement, is not sufficient to compute the monitoring function.
We start by investigating the computability of the monitoring function.

\begin{theorem}[Ineffectiveness of Timed Monitoring]
\label{thm_monitpringundecidableforNTBA}
    The function~\myquot{Given a TBA~$\aut$ and an observation~$(\rho, t)$, return $\monitoring{L(\aut)}(\rho, t)$} is not computable.
\end{theorem}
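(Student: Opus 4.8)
The plan is to reduce from a known undecidable problem about timed automata, namely the universality (or emptiness of the complement) problem for TBA, which Alur and Dill proved undecidable. Concretely, I would show that being able to compute $\monitoring{L(\aut)}$ would let us decide whether $L(\aut) = \infwords$, i.e., whether $\aut$ accepts all infinite timed words. The connection is the following observation about the empty observation: $\monitoring{L(\aut)}(\emptyword, 0) = \top$ holds if and only if $\emptyword \conc{0} \mu \in L(\aut)$ for all $\mu \in \infwords$, which is exactly $L(\aut) = \infwords$. So if the monitoring function were computable, we could evaluate it on $(\emptyword, 0)$ and read off universality from whether the verdict is $\top$, giving a decision procedure for an undecidable problem.

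First I would recall precisely which undecidability result to cite: Alur and Dill show that universality for nondeterministic TBA is undecidable (equivalently, it is undecidable whether a TBA accepts every infinite timed word). Then the core of the argument is to verify the biconditional above. The forward direction is immediate from the definition of $\monitoring{\varphi}$ with $\varphi = L(\aut)$, $\rho = \emptyword$, and $t = 0$, using that $\emptyword \conc{0} \mu = \mu$ for every $\mu$. The reverse direction is equally direct. The one technical point to address is that the definition of the monitoring function quantifies over \emph{all} $\mu \in \infwords$, whereas observations are required to use only rational timepoints; here, however, $\mu$ ranges over the semantic extensions (not over algorithm inputs), so no rationality restriction applies to $\mu$, and the equivalence with genuine universality over all timed words is clean.

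The main obstacle I anticipate is making sure the reduction is from a problem that is genuinely undecidable for the exact automaton model in the statement, and that the $\top$ case alone suffices. A subtlety is that distinguishing $\top$ from $\{\bot, \unknown\}$ is enough: I do not need the full three-valued output to be computable, only that a hypothetical algorithm computing the complete function would in particular tell me whether the verdict is $\top$. If one prefers a reduction that avoids relying on universality being stated for this precise model, an alternative is to encode a more robust undecidable source (for instance, the halting problem via a standard timed encoding, or non-universality on the complement side) into a TBA $\aut$ so that $L(\aut) = \infwords$ iff the source instance is positive; but the cleanest route is the direct reduction from TBA universality. I would therefore present the universality reduction as the primary argument, being careful to state that any algorithm computing $\monitoring{L(\aut)}$ would decide universality, contradicting Alur and Dill, so no such algorithm exists.
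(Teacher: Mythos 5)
Your proposal is correct and follows essentially the same route as the paper's own proof: both reduce TBA universality to evaluating $\monitoring{L(\aut)}(\emptyword,0)$ and checking whether the verdict is $\top$, invoking the undecidability of universality for timed automata due to Alur and Dill. The additional remarks you make (rationality only constrains inputs, and distinguishing $\top$ from the other verdicts suffices) are accurate but not needed beyond what the paper already states.
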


\begin{proof}
For every property~$\varphi \subseteq \infwords$, we have $\monitoring{\varphi}(\emptyword,0) = \top$ if and only if $\varphi = \infwords$.
Hence, universality of a TBA~$\aut$ reduces to checking whether $\monitoring{L(\aut)}(\emptyword,0) = \top$.
As universality for timed automata is undecidable~\cite{alur2004decision}, the monitoring function cannot be computable.
\end{proof}

Note that the specification automaton~$\aut$ is part of the input in the problem considered in Theorem~\ref{thm_monitpringundecidableforNTBA}.
We leave it open whether $\monitoring{L(\aut)}$ is computable for every fixed $\aut$, i.e., in the setting where only the observation is the input.

Next, we turn our attention to deciding monitorability for TBA.

\begin{theorem}
\label{thm:undecmoni}
    Strong and weak monitorability are undecidable for properties given by TBA.
\end{theorem}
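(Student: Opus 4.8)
The plan is to reduce an undecidable problem about TBA to each of the monitorability problems. Given that Theorem~\ref{thm_monitpringundecidableforNTBA} already exploits universality of TBA (undecidable by \cite{alur2004decision}), my first instinct is to engineer a property whose monitorability status encodes whether a given TBA~$\aut$ is universal (or, dually, whether $L(\aut)=\emptyset$, which is decidable, so universality or non-emptiness of the complement is the better target). The key design principle is that both strong and weak monitorability are \emph{semantic} properties of the language, so the reduction must build a new language $\psi$ from the input TBA such that the existence of a prefix yielding a conclusive verdict (weak) or the reachability of such prefixes from everywhere (strong) is in bijection with the undecidable property.

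First I would fix a fresh letter, say $\#$, not read by $\aut$, and consider timed words of the form $\rho \cdot (\#, t) \cdot \mu$ where the marker $\#$ separates a ``guess phase'' from a ``verification phase.'' The idea is to define $\psi$ so that after reading $\#$, the language behaves like a guaranteed-unmonitorable tail (e.g.\ a copy of $G_{\ge 0}F_{\ge 0}a$, which we already know is neither weakly nor strongly monitorable) unless the word read before $\#$ witnesses the undecidable event. Concretely, I would try to arrange that $\psi$ is weakly monitorable if and only if $\aut$ is \emph{not} universal: a conclusive verdict can be forced exactly when there exists a finite prefix lying outside $L(\aut)$ that the monitor can recognize, and universality removes all such prefixes, collapsing $\psi$ into the unmonitorable gadget. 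The same gadget should handle strong monitorability by making the ``escape to a conclusive verdict'' reachable from every prefix precisely when the undecidable condition holds, using Remark~\ref{rem_properties}(2)--(3) to propagate monitorability along extensions.

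The main obstacle I anticipate is controlling the \emph{complement} behavior. Monitorability quantifies over \emph{all} infinite extensions (through the definition of $\monitoring{\varphi}$, which asks about $\rho\cdot_t\mu$ for all $\mu$), so to show a verdict is $\top$ or $\bot$ I must reason about both membership and non-membership in $\psi$ simultaneously. Since TBA are not closed under complement, I cannot freely build an automaton for $\overline{L(\aut)}$, and this is exactly the asymmetry the paper is highlighting. Hence the construction must be arranged so that the \emph{conclusive} side of each verdict uses only a disjointness/reachability argument that does not require complementing $\aut$ — for instance, a $\bot$ verdict that follows purely from a syntactic marker rather than from deciding membership. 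I expect the correct framing to reduce \emph{from} universality (or from the universality-equivalent problem that is genuinely undecidable), building $\psi$ so that universality of $\aut$ forces every prefix into the unmonitorable tail, while non-universality injects a prefix that, together with the marker, deterministically forces $\bot$.

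To make this robust, I would verify three checkpoints in order: (i) that the unmonitorable gadget alone (the $\varphi_3$-style tail) indeed yields $\unknown$ on every observation, so the ``bad'' case is settled; (ii) that a witnessing prefix outside $L(\aut)$ can be made to trigger a conclusive verdict in finitely many steps, establishing weak monitorability in the ``good'' case; and (iii) that the construction lifts from weak to strong by ensuring the witnessing escape is reachable after \emph{every} extension, not just from $\varepsilon$, which is where the strong quantifier bites. If the natural construction only settles weak monitorability, I would refine the gadget so that the undecidable event can be re-triggered arbitrarily late (e.g.\ by allowing the separating marker to be emitted at any time), thereby coupling the two notions to the same undecidable question and obtaining both undecidability results from a single reduction.
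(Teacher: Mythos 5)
Your overall strategy---reduce from universality of timed automata, use a fresh marker letter, and route ``bad'' prefixes into an inherently unmonitorable $G_{\ge 0}F_{\ge 0}$-style tail---is the right one, and you correctly identify the central obstacle (no complementation for TBA, so conclusive verdicts must be forced by reachability of sink components rather than by deciding membership). However, there is a genuine gap in the strong-monitorability half. Your construction aims for ``monitorable iff $\aut$ is \emph{not} universal,'' with non-universality injecting a prefix that forces $\bot$. But consider what happens to strong monitorability when $L(\aut)\neq\emptyset$: some prefix $\rho$ \emph{does} have an accepting run, and after $\rho\,\Bsymbol$ at least one run of your automaton enters the unmonitorable tail. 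From that point on the derivative is a fixed language that is neither empty nor universal, so \emph{no} extension ever yields a conclusive verdict, and the property fails to be strongly monitorable---regardless of whether $\aut$ is universal. In other words, your construction makes strong monitorability equivalent (essentially) to emptiness of $\aut$, which is decidable, not to universality. Your checkpoint~(iii) flags this, but the proposed fix (``allow the marker to be emitted at any time'' / re-trigger the witness arbitrarily late) is not worked out and is in tension with your own $\bot$-mechanism: a $\bot$ verdict requires \emph{all} runs to be dead, and adding restart transitions keeps runs alive. The paper resolves this by flipping the polarity for the strong case: it reduces from universality of timed automata over \emph{finite} words, sending accepting locations via $\Bsymbol$ to an accepting sink (so universality makes every observation extendable to verdict $\top$, hence strongly monitorable) and non-accepting locations via $\Bsymbol$ to the ``accept iff infinitely many $\Bsymbol$'' component (so a non-accepted finite word yields a prefix from which the verdict is $\unknown$ forever). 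Strong and weak monitorability end up requiring two \emph{different} reductions with different polarities, not one coupled construction.

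For the weak half, your inverted reduction (non-universality of a finite-word timed automaton $\Leftrightarrow$ existence of a $\bot$-witnessing prefix) is in fact salvageable and arguably more direct than the paper's two-step argument (which goes through TBA universality over infinite words, the existence of a \emph{universal} prefix via clock-resetting restart transitions, and a separate reduction from weak $\top$-monitorability to weak monitorability by making $\bot$ verdicts impossible). But you leave unverified exactly the claims on which it stands or falls: (i) that in the universal case \emph{no} prefix whatsoever---including the empty prefix, prefixes containing no marker, and prefixes with several markers---yields a conclusive verdict, which requires checking that every prefix can be extended both into and out of the language (the ``no $\top$ ever'' direction needs that pure-$\Sigma$ infinite continuations are always rejected, i.e., that no original location is Büchi-accepting in your new automaton); and (ii) that ``a finite prefix lying outside $L(\aut)$'' is given a precise meaning---$L(\aut)\subseteq\infwords$ for a TBA, so you must either switch to finite-word acceptance for the input automaton or explain which undecidable problem you are actually reducing from. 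Without pinning down the automaton and verifying these cases, the biconditional is asserted rather than proved.
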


\begin{proof}
Strong monitorability for untimed non-deterministic Büchi Automata is \pspace-complete~\cite{diekert2012distributed}, which can be shown by a reduction from the universality problem~\cite{diekert2015note}. 
Analogously, we reduce the (undecidable~\cite{alur1994tba}) universality problem for non-deterministic timed automata (over finite words) to the problem of strong monitorability, following Diekert, Muscholl, and Walukiewicz~\cite{diekert2015note}.

Let $\aut = (Q,Q_0,\Sigma,\clocks, \Delta, F)$ be such a timed automaton, i.e., a finite run is accepting if it ends in a location in $F$.
We assume w.l.o.g.\ that $\aut$ is complete in the sense that every word has a run. 
This can always be achieved by adding a fresh sink location and by rerouting all missing transitions to it.

We add a new letter~$\Bsymbol \notin \Sigma$ to obtain $\Sigma_\Bsymbol = \Sigma \cup \{\Bsymbol\}$ and construct a TBA~$\aut' = (Q',Q_0,\Sigma_\Bsymbol,\clocks, \Delta', F')$ from $\aut$ such that $\aut'$ is strongly monitorable if and only if $\lang(\aut) = \finwords$.
To this end, we introduce three new locations~$\Dstate,\Estate,\Fstate$, i.e., $Q' = Q \cup \{\Dstate,\Estate,\Fstate\}$.
Next, we define $\Delta'$ (see Figure~\ref{fig:undec-automatastrong}) by copying the transitions from $\Delta$ and by adding the following transitions, where we write $q \xrightarrow{a} q'$ to denote $(q,q',a,\emptyset,\true)$: 
\begin{itemize}
    \item $\{ q \xrightarrow{\Bsymbol} \Dstate \mid q \in Q \setminus F\}$: from every non-accepting location of $\aut$, there is a $\Bsymbol$-transition to $\Dstate$.
    \item $\{ \Dstate \xrightarrow{a} \Estate, \Estate \xrightarrow{a} \Estate \mid a \in \Sigma \}$: for every $a \neq \Bsymbol$ there is an $a$-transition from $\Dstate$ to $\Estate$ and an $a$-labeled self-loop on $\Estate$.
    \item $\{ \Estate \xrightarrow{\Bsymbol} \Dstate, \Dstate \xrightarrow{\Bsymbol} \Dstate \}$: there is a $\Bsymbol$-transition from $\Estate$ to $\Dstate$ and a $\Bsymbol$-labeled self-loop on $\Dstate$.
    \item $\{ q \xrightarrow{\Bsymbol} \Fstate \mid q \in F\}$: from every accepting location there is a $\Bsymbol$-transition to $\Fstate$.
    \item $\{ \Fstate \xrightarrow{a} \Fstate \mid a \in \Sigma_\Bsymbol \}$: for every letter in $\Sigma_\Bsymbol$, there is a self-loop on $\Fstate$ labeled with that letter.
\end{itemize} 
We define the accepting locations of $\aut'$ as $F' = Q \cup \{\Dstate,\Fstate\}$.

\begin{figure}[ht]
    \centering

\begin{tikzpicture}[thick]
  \draw[thick,rounded corners] (0,0.2) rectangle (6,1.8);
  
  \draw[thick,rounded corners] (4,0.2) rectangle (6,1.8);

    \node[anchor=west] at (0,1.45) {$Q\setminus F$};
    \node[anchor=west] at (4,1.45) {$F$};

  \node[state, accepting] (t) at (7.5,1) {$\Fstate$};

  \node[state, accepting] (r) at (-1.5,1) {$\Dstate$};
  \node[state] (s) at (-3.5,1) {$\Estate$};

  \path[->, > = stealth]
  (5,1) edge node[above, near end] {$\Bsymbol$} (t)
  (t) edge[loop above] node[above] {$\Sigma_\Bsymbol$} ()
  (1,1) edge node[above, near end] {$\Bsymbol$} (r)
  (r) edge[bend left] node[below] {$\Sigma$} (s)
  (s) edge[bend left] node[above] {$\Bsymbol$} (r)
  (r) edge[loop above] node[above] {$\Bsymbol$} ()
  (s) edge[loop above] node[above] {$\Sigma$} ()
  ;

\end{tikzpicture}
        \caption{The TBA $\aut'$ constructed for the proof of Theorem~\ref{thm:undecmoni}.}
    \label{fig:undec-automatastrong}
\end{figure}

Figure~\ref{fig:undec-automatastrong} shows the TBA $\aut'$ with the locations of the (finite-word) timed automaton $\aut$ partitioned into $F$ and $Q\setminus F$.
In the figure, the new locations $\Dstate,\Estate,$ and $\Fstate$ and accompanying transitions are shown separately from the original automaton. Double circles denote accepting locations. Intuitively, processing a $\Bsymbol$ from an accepting location of $\aut$ takes the run to the accepting sink~$\Fstate$. 
On the other hand, processing a $\Bsymbol$ from a non-accepting location of $\aut$ takes the run to a component where the run continuation is accepting if and only if it processes infinitely many $\Bsymbol$'s.

It remains to show that the language $\lang(\aut')$ is strongly monitorable if and only if $\lang(\aut) = \finwords$.
One direction is trivial: If $\lang(\aut) = \finwords$, then $\lang(\aut') = \infwords_\Bsymbol$, as all words without a $\Bsymbol$ can be processed using the states of the complete automaton~$\aut$ (which are all accepting in $\aut'$) and all words with a $\Bsymbol$ can be accepted by moving to state~$\Fstate$ (which is an accepting sink) when processing the first $\Bsymbol$.  
Hence, $\lang(\aut')$ is strongly monitorable.

On the other hand, if $\lang(\aut) \neq \finwords$, then there exists a finite word~$\rho \notin \lang(\aut)$.
Hence, every run prefix of $\aut'$ processing $\rho \conc{\tau(\rho)} (\Bsymbol, \tau(\rho))$ must be in location~$\Dstate$.
Now, chose some $a \in \Sigma$.
Then, for all $\rho'' \in \finwords_\Bsymbol$, we have:
\begin{itemize}
    \item $\rho \conc{\tau(\rho)} (\Bsymbol, \tau(\rho)) \conc{} \rho'' \conc{} (a, 0)(a, 1)(a, 2)\cdots  \notin L(\aut')$ (as it contains only finitely many $\Bsymbol$). 
    \item $\rho \conc{\tau(\rho)} (\Bsymbol, \tau(\rho)) \conc{} \rho'' \conc{} (\Bsymbol, 0)(\Bsymbol, 1)(\Bsymbol, 2)\cdots  \in L(\aut')$ (as it contains infinitely many $\Bsymbol$).
\end{itemize}
Hence, $\monitoring{L(\aut')}(\rho \conc{\tau(\rho)} (\Bsymbol, \tau(\rho)) \conc{} \rho'') = \unknown$ for all $\rho''$.
Thus, $L(\aut')$ is not strongly monitorable.


In the case for weak monitorability we reduce the (undecidable~\cite{alur1994tba}) universality problem for TBA (not TA as we did in the strong case).
Additionally, as an intermediate step, we consider a problem about Brzozowski derivatives of properties of infinite timed words:
Let $\rho \in \finwords$ be a finite timed word and $\varphi\subseteq \infwords$ be such a property.
We say that $\rho$ is a universal prefix for $\varphi$ if the Brzozowski derivative 
    \[\{ \mu \in \infwords \mid \rho \conc{} \mu \in \varphi \}\]
    of $\varphi$ and $\rho$
is equal to $\infwords$.
Note that if $\varphi$ is equal to $\infwords$, then every prefix is universal for $\varphi$. 
However, the property~$(a,0)\conc{} \infwords$ has a universal prefix (e.g., $(a,0)$), but is not universal.

Further, we say that $\varphi$ is weakly $\top$-monitorable (cp.\ Definition~\ref{def:wbmonitorability}) if and only if there is a $\rho$ such that $\monitoring{\varphi}(\rho) = \top$, i.e., in comparison to (standard) weak monitorability, we only consider the verdict~$\top$.
Note that $\varphi$ has a universal prefix if and only if it is weakly $\top$-monitorable.

So, it remains to first reduce universality of TBA to the existence of a universal prefix for languages of TBA and then to reduce weak $\top$-monitorability to weak monitorability.

We begin with reducing TBA universality to the existence of universal prefixes. 
Intuitively, we add edges allowing to \myquot{restart} a run which allows every prefix to simulate the empty prefix.
Then, every prefix behaves like the empty prefix, which is universal if and only if the automaton is universal.

Let $\aut = (Q,Q_0,\Sigma,\clocks, \Delta, F)$ be a TBA.
We add a new letter~$\Bsymbol \notin \Sigma$ to obtain $\Sigma_\Bsymbol = \Sigma \cup \{\Bsymbol\}$ and construct a TBA~$\aut' = (Q',Q_0,\Sigma_\Bsymbol,\clocks, \Delta', F')$ from $\aut$ such that $\lang(\aut) = \infwords$ if and only if $\aut'$ has a universal prefix.
To this end, we introduce two new locations~$\Dstate$ and $\Estate$, i.e., $Q' = Q \cup \{\Dstate,\Estate\}$.
Next, we define $\Delta'$ (see Figure~\ref{fig:undec-automata_weak}) by copying the transitions from $\Delta$ and by adding the following transitions, where we write $q \xrightarrow{a} q'$ to denote $(q,q',a,\emptyset,\true)$: 
\begin{itemize}
    \item $\{ q \xrightarrow{a} \Dstate \mid q \in Q_0, a \in \Sigma_\Bsymbol\}$: from every initial location of $\aut$, there is an $a$-transition to $\Dstate$ for every $a \in \Sigma_\Bsymbol$.
    \item $\{ \Dstate \xrightarrow{a} \Estate, \Estate \xrightarrow{a} \Estate \mid a \in \Sigma \}$: for every $a \neq \Bsymbol$ there is an $a$-transition from $\Dstate$ to $\Estate$ and an $a$-labeled self-loop on $\Estate$.
    \item $\{ \Estate \xrightarrow{\Bsymbol} \Dstate, \Dstate \xrightarrow{\Bsymbol} \Dstate \}$: there is a $\Bsymbol$-transition from $\Estate$ to $\Dstate$ and a $\Bsymbol$-labeled self-loop on $\Dstate$.
    \item $\{ (q,q_0,\Bsymbol,\clocks,\true) \mid q \in Q, q \in Q_0\}$: from every location there is a $\Bsymbol$-transition to each initial location, which resets all clocks.
\end{itemize} 
We define the accepting locations of $\aut'$ as $F' = Q \cup \{\Dstate\}$.

\begin{figure}[ht]
    \centering

\begin{tikzpicture}[thick]
  \draw[thick,rounded corners] (0,0.2) rectangle (4,1.8);
  \node[anchor = north west ] at (0,1.8) {$\aut$};


  \node[state, inner sep =0,minimum size = 15] (l1) at (1,.6) {};
  \node[state, inner sep =0,minimum size = 15] (l2) at (2.5,.75) {};
  \node[state, inner sep =0,minimum size = 15] (l3) at (3.25,1) {};
  \node[state, accepting, inner sep =0,minimum size =15] (r) at (-1.5,.6) {$\Dstate$};
  \node[state, inner sep =0,minimum size =15] (s) at (-3.5,.6) {$\Estate$};

  \path[->, > = stealth]
  (0.4,1) edge (l1)
  (l1) edge[loop above] node[above] {$\Bsymbol$} (l1)
  (l2) edge[bend left] node[above] {$\Bsymbol$} (l1)
  (l3) edge[bend right] node[above] {$\Bsymbol$} (l1)
  (l1) edge node[above, near end] {$\Sigma_\Bsymbol$} (r)
  (r) edge[bend left] node[below] {$\Sigma$} (s)
  (s) edge[bend left] node[above] {$\Bsymbol$} (r)
  (r) edge[loop above] node[above] {$\Bsymbol$} ()
  (s) edge[loop above] node[above] {$\Sigma$} ()
  ;

\end{tikzpicture}
        \caption{The TBA $\aut'$ constructed for the proof of Theorem~\ref{thm:undecmoni}.}
    \label{fig:undec-automata_weak}
\end{figure}

Let $\lang(\aut) = \infwords$ and fix some $\mu' \in \infwords_\Bsymbol$.
If $\mu'$ contains infinitely many $\Bsymbol$'s, then it is accepted by $\autp$ using the new states $\Dstate$ and $\Estate$. On the other hand, if $\mu'$ contains only finitely many $\Bsymbol$'s, then it has the form 
\[
\mu' = 
\rho_0 \conc{} (\Bsymbol,t_0)
\conc{}
\rho_1 \conc{} (\Bsymbol,t_1)
\cdots 
 (\Bsymbol,t_{n-2})
 \conc{}
\rho_{n-1} \conc{} (\Bsymbol,t_{n-1})
\conc{}
\mu''
\]
for some $n \ge 0$, where the $\rho_i$ and $\mu''$ are $\Bsymbol$-free. 
As each $\rho_i$ is a prefix of some word in $\lang(\aut) = \infwords$ and $\mu''$ is in $\lang(\aut) = \infwords$, one can construct an accepting run of $\autp$ on $\mu'$.
Thus, $\lang(\autp) = \infwords_\Bsymbol$, i.e., $\varepsilon$ is a universal prefix of $\autp$.

Now for the other direction. If a word~$\rho$ is a universal prefix of $\autp$, then for all $\mu \in \infwords$, the word~$\rho \conc{} (\Bsymbol,0) \conc{} \mu$ is accepted by $\autp$. 
Hence, there is an accepting run of $\aut'$ on $\rho \conc{} (\Bsymbol,0) \conc{} \mu$. As all $\Bsymbol$-transitions lead to an initial state of $\aut$ and reset the clocks, and as $\mu$ does not contain any $\Bsymbol$'s, the suffix of the run on $\mu$ must also be an accepting run of $\aut$, i.e., we have $\mu \in \lang(\aut)$. Thus, $\lang(\aut) = \infwords$.
This concludes the first step of our proof.

In the second and last step of  our proof we reduce weak $\top$-monitorability to (standard) weak monitorability.
Intuitively, we manipulate the automaton so that it can never give the verdict~$\bot$ while preserving the existence of observations that yield the verdict~$\top$.

Let $\aut = (Q,Q_0,\Sigma,\clocks, \Delta, F)$ be a TBA.  
We add a new letter~$\Bsymbol \notin \Sigma$ to obtain $\Sigma_\Bsymbol = \Sigma \cup \{\Bsymbol\}$ and construct a TBA~$\autp = (Q',Q'_0,\Sigma_\Bsymbol,\clocks, \Delta', F')$ from $\aut$ such that $\aut$ is weakly $\top$-monitorable if and only if $\autp$ is weakly monitorable.

Now $Q'$ essentially contains two copies of $Q$: $Q_1=Q\times\{1\}$ and $Q_2=Q\times\{2\}$ as well as a new additional accepting location~$r$. The transition relation of $\autp$ is defined as follows (see also Figure~\ref{fig:reductionwb}), we write $q \xrightarrow{a} q'$ to denote $(q,q',a,\emptyset,\true)$:
\begin{itemize}

    \item $\{ (q,1) \xrightarrow{\Bsymbol} (q,1), (q,1) \xrightarrow{\Bsymbol} r \mid  q \in Q \}$: locations of $Q_1$ have a $\Bsymbol$-labeled self-loop and a $\Bsymbol$-transition to $r$.

    \item $\{ ((q,1),(q',2),a,\lambda, g) \mid (q,q',a,\lambda, g) \in \Delta \}$: locations of $Q_1$ have their $\Sigma$-labeled transitions redirected to $Q_2$. 
    
    \item $\{ (q,2),(q',2),a,\lambda, g) \mid (q,q',a,\lambda, g) \in \Delta \}$: locations of $Q_2$ have copies of the $\Sigma$-labeled transitions from $\aut$.
    
    \item $\{ (q,2) \xrightarrow{\Bsymbol} (q,1) \mid q \in Q \}$: locations of $Q_2$ have $\Bsymbol$-labeled transitions directed back to $Q_1$.
    
    \item $\{ r \xrightarrow{\Bsymbol} r \}$: The new location $r$ has a $\Bsymbol$-labeled self-loop.

\end{itemize}
The set~$Q'_0$ of initial locations of $\autp$ is $Q_0\times\{1\}$ and the set~$F'$ of accepting locations of $\autp$ is $F\times\{2\}\cup \{r\}$.

\begin{figure}[h]
\centering

\begin{tikzpicture}[thick]
  \draw[thick,rounded corners] (0,0) rectangle (5,1.75);
  
  \draw[thick,rounded corners] (7,0) rectangle (12,1.75);

\node at (.35,.25) {$Q_1$};
\node at (7.35,.25) {$Q_2$};

\node[state, inner sep =0,minimum size = 15] (l1) at (1,.75) {};
\node[state, inner sep =0,minimum size = 15] (l2) at (2,.5) {};
\node[state, inner sep =0,minimum size = 15] (l3) at (3.25,1.25) {};
\node[state, inner sep =0,minimum size = 15] (l4) at (4,.75) {};

\node[state, inner sep =0,minimum size = 15] (r1) at (8,.75) {};
\node[state, inner sep =0,minimum size = 15,accepting] (r2) at (9,.5) {};
\node[state, inner sep =0,minimum size = 15] (r3) at (10.25,1.25) {};
\node[state, inner sep =0,minimum size = 15,accepting] (r4) at (11,.75) {};

\node[state, inner sep =0,minimum size = 15,accepting] (r) at (2.5,-.5) {$r$};

\node[minimum size = 30] (rr) at (11.5,.875){};

\path[->, > = stealth]
(l1) edge[bend right=50] node[below,near end] {$\Bsymbol$} (r)
(l2) edge[bend right] node[left, near end] {$\Bsymbol$} (r)
(l3) edge[bend left] node[yshift=-3,right,very near end] {$\Bsymbol$} (r)
(l4) edge[bend left=50] node[below,near end] {$\Bsymbol$} (r)

(l1) edge[loop above] node[left] {$\Bsymbol$} ()
(l2) edge[loop above] node[left] {$\Bsymbol$} ()
(l3) edge[loop left] node[left] {$\Bsymbol$} ()
(l4) edge[loop right] node[below] {$\Bsymbol$} ()

(r) edge[loop below] node[right] {$\Bsymbol$} ()

(5.1, 1.25) edge[bend left, ultra thick] node[above] {$\Sigma$} (6.9,1.25)
(6.9, .5) edge[bend left, ultra thick] node[below] {$\Bsymbol$} (5.1,.5)
(rr) edge[loop right, ultra thick] node[above,yshift=3] {$\Sigma$} ()
(-.25,.875) edge[ultra thick] (.35,.875);
;
\end{tikzpicture}
\caption{The TBA $\autp$ constructed for the proof of Theorem~\ref{thm:undecmoni}. Intuitively, $\autp$ has two disjoint copies of the locations of $\aut$ and $\Sigma$-labeled transitions lead from both copies to the second copy, while $\#$-labeled transitions from the second copy lead to the first copy. The first copy additionally has $\#$-labeled self-loops on all locations and $\#$-labeled transitions to $r$.}
\label{fig:reductionwb}
\end{figure}

Now, let $\rho\in T\Sigma^*$ be a timed word witnessing weak $\top$-monitorability of $\aut$, i.e., $\rho\cdot \mu\in L(\aut)$ for every $\mu\in \infwords$.  
We define $\rho'=\rho$.
In $\aut$, processing $\rho'$ leads to a location in $Q_2$. 
We argue that $\rho'$ witnesses weak monitorability of $\autp$. 
To this end, consider any $\mu'\in \infwords_\Bsymbol$ and let $\mu$ be obtained from $\mu'$ by removing all occurrences of $\Bsymbol$. 
If $\mu$ is infinite, we may simply mimic the accepting run of $\aut$ on $\mu$ from $\rho$ in $\autp$.
Clearly, this will also be accepting.
If $\mu$ is finite, $\mu'$ must have a suffix of the form~$(\Bsymbol^\omega,\tau)$, where $\tau$ is a sequence of timepoints.
This suffix is accepted by utilizing the $\Bsymbol$ transitions to $r$, from where the suffix~$(\Bsymbol^\omega,\tau)$ can be accepted.

For the opposite direction, let $\rho'\in T{\Sigma}_\Bsymbol^*$ be a timed word witnessing weak monitorability of $\autp$. 
That is, either $\rho'\cdot\mu'\in L(\autp)$ for all $\mu'$ or $\rho'\cdot\mu'\not\in L(\autp)$ for all $\mu'$. 
However, note that for any finite $\rho'$, we have $\rho'\cdot(\Bsymbol,0)(\Bsymbol,1)(\Bsymbol,2)\cdots  \in L(\autp)$, due to the $\Bsymbol$-labeled transition to $r$, from where $(\Bsymbol,0)(\Bsymbol,1)(\Bsymbol,2)\cdots $ is accepted. 
Thus, we must be in the first case where we have $\rho' \cdot\mu'\in L(\autp)$ for all $\mu'$.

Now, let $\rho\in T\Sigma^*$ be obtained from $\rho'$ by stripping all occurrences of $\Bsymbol$ in $\rho'$. 
We show that $\rho$ is a word witnessing weak $\top$-monitorability of $\aut$. 
Note that any run of $\autp$ on $\rho'$ reaching a location in $Q_1$ or $Q_2$ can be mimicked by a run of $\aut$ on $\rho$ reaching the corresponding location in $Q$. 
Let $\mu\in \infwords$. 
Then $\rho'\cdot\mu\in L(\autp)$. 
Clearly, this must be due to an accepting run where a suffix of the run processing $\mu$ stays in $Q_2$. 
Hence, $\rho\cdot\mu\in L(\aut)$, as we may mimic, for the prefix $\rho$, the prefix processing $\rho'$ of the accepting run of $\autp$ processing $\rho'\cdot\mu$, and, for the suffix $\mu$, we simply copy the run staying in $Q_2$.
\end{proof}

Due to Remark~\ref{rem_properties}, we also obtain the undecidability of strong and  weak $(\rho,t)$-moni\-torability.

\begin{corollary}
    Strong and weak $(\rho,t)$-monitorability is undecidable for properties given by TBA, even if $(\rho, t)$ is fixed.
\end{corollary}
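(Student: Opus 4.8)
The plan is to derive the corollary directly from Theorem~\ref{thm:undecmoni} together with the monotonicity properties collected in Remark~\ref{rem_properties}, rather than repeating any automaton construction. The key observation is that the theorem establishes undecidability of strong and weak monitorability, which by definition are the special cases of $(\rho,t)$-monitorability with $(\rho,t) = (\emptyword,0)$. Since $(\emptyword,0)$ is a perfectly valid fixed observation, this already hands us undecidability for the $(\rho,t)$-variants in the case $(\rho,t)=(\emptyword,0)$, giving the ``even if $(\rho,t)$ is fixed'' clause for free with this particular choice.

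More precisely, I would argue as follows. For strong $(\rho,t)$-monitorability: by definition, $\varphi$ is strongly monitorable exactly when it is strongly $(\emptyword,0)$-monitorable. Hence the map ``given a TBA~$\aut$, output $(\aut, \emptyword, 0)$'' is a trivial (identity-on-the-automaton) reduction from strong monitorability to strong $(\rho,t)$-monitorability with the observation fixed to $(\emptyword,0)$. Since strong monitorability is undecidable by Theorem~\ref{thm:undecmoni}, strong $(\rho,t)$-monitorability is undecidable as well, and it remains undecidable even when we restrict to the single fixed observation $(\emptyword,0)$. The identical argument, using the second bullet of Definition~\ref{def:monitorability} and the equivalence of weak monitorability with weak $(\emptyword,0)$-monitorability, handles the weak case.

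I would also note the connection to the remark preceding the corollary, which already points out the implication ``Problem~\ref{decprop_strong} undecidable $\Rightarrow$ Problem~\ref{decprop_strongvar} undecidable'' and likewise for the weak versions; the corollary is essentially the formal statement of that implication, strengthened with the observation that the hardness persists for a \emph{fixed} observation. The only subtlety worth spelling out is exactly this ``fixed'' strengthening: one must point out that the reduction does not need to vary the observation at all, since $(\emptyword,0)$ suffices, so the problem is undecidable already on the slice where $(\rho,t)$ is held constant. I do not anticipate a genuine obstacle here—the content is entirely in Theorem~\ref{thm:undecmoni}—so the ``hard part'' is merely presentational: being careful to phrase the reduction so that the automaton (and not the observation) carries the undecidability, thereby justifying the ``even if $(\rho,t)$ is fixed'' qualifier.
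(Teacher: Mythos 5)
Your proposal is correct and matches the paper's own (one-line) justification: the paper likewise derives the corollary by specializing to the observation $(\emptyword,0)$, for which strong/weak $(\rho,t)$-monitorability coincides by definition with strong/weak monitorability, so Theorem~\ref{thm:undecmoni} transfers directly and the hardness already lives on the fixed slice $(\rho,t)=(\emptyword,0)$. The only cosmetic difference is that the paper attributes the step to Remark~\ref{rem_properties} while you invoke the definitional identity and the reduction remark following the problem list; the underlying argument is the same.
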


\section{Monitorability for DTMA}
\label{sec:dtma}

In this section, we show that (strong and weak) monitorability is decidable for properties given by DTMA.
The key difference to TBA, for which we have shown that monitorability is undecidable, is that they are trivially closed under complement~\cite{alur1994tba}, which we rely on in our proof.
This again demonstrates the importance of having automata for the property and its complement, as noted throughout this paper.

\newcommand{\acc}{\mathrm{Acc}}

Let us begin by introducing some notation and definitions.
Fix some TBA or TMA~$\aut = (Q, Q_0, \Sigma, \clocks, \Delta, \acc)$.
We write $(q_0, v_0) \xrightarrow{\rho}_\aut (q_n, v_n)$ for a finite timed word~$\rho = (\sigma, \tau) \in \finwords$ to denote the existence of a 
finite sequence \[(q_0, v_0) \xrightarrow{(\sigma_1, \tau_1)} (q_1,v_1) \xrightarrow{(\sigma_2, \tau_2)} \cdots \xrightarrow{(\sigma_n, \tau_n)} (q_n,v_n) \]
 of states,
where for all $1 \leq i \leq n$ there is a transition $(q_{i-1},q_{i},\sigma_{i},\lambda_i,g_i)$ such that $v_{i}(c) = 0$ for all $c$ in $\lambda_i$ and $v_{i-1}(c) + (\tau_i - \tau_{i-1})$ otherwise, and $g$ is satisfied by the valuation $v_{i-1}+(\tau_{i} - \tau_{i-1})$, where we use $\tau_0 = 0$. 

\begin{definition}[Non-Empty Language States, Reach Set, $\textit{Pre}^*$]
Let $\aut$ be a TBA or TMA.

\begin{itemize}
    
    \item The set of non-empty language states of $\aut$ is 
    \[\nestates{\aut} = \{ s \mid s \textit{ is a state of } \aut \textit{ and } \lang(\aut, s) \neq \emptyset \}.\]
Here, $\lang(\aut, s)$ is the set of infinite timed words accepted by a run starting in the state~$s$.

\item Given an observation~$(\rho, t)$, the set of states in which a run over $\rho$ can end starting from the initial states of $\aut$ after time~$t$ has passed is
\[\reachset\aut(\rho,t) = \bigcup\nolimits_{q_0 \in Q_0} \{(q, v + (t-\tau(\rho))) \mid (q_0, v_0) \xrightarrow{\rho}_\aut (q, v) \},\]
where $v_0$ is the clock valuation mapping every clock to $0$.
We call $\reachset\aut(\rho,t)$ the reach set of $(\rho, t)$ in \aut.

    \item    We define $\textit{Pre}^*_{\aut}(S)$ of a set~$S$ of states as
    \[\{(q,v) \mid (q,v) \xrightarrow{\rho}_\aut (q',v') \textit{ for some } (q',v' + t ) \in S, \rho \in \finwords \textit{, and } t \in \nnreals\}.\]
\end{itemize}
\end{definition}

A symbolic state is a pair~$(q, Z)$  of a location~$q$ and a zone~$Z$.
\label{def:zones}A zone is a finite conjunction of clock constraints of the form $x_1 \sim n$ or $x_1 - x_2 \sim n$, where $x_1, x_2$ are clocks, $\sim\, \in \{<, \le, =, \ge, >\}$, and $n \in \nnrats$. 
Such a zone describes a convex set of clock valuations.
Zones may be efficiently represented using so-called Difference-bounded Matrices (DBM)~\cite{DBLP:conf/ac/BengtssonY03}.

\begin{proposition}[\cite{DBLP:conf/formats/GrosenKLZ22}]
\label{lemma:algorithms}
There are zone-based algorithms for the following problems:
\begin{itemize}
    \item Given a TBA, compute its nonempty language states.
    \item Given a TBA or TMA~$\aut$ and an observation~$(\rho, t)$, compute the reach set~$\reachset{\aut}(\rho, t)$.
    \item Given a TBA or TMA~$\aut$ and a finite union~$S$ of symbolic states, compute $\textit{Pre}^*_{\aut}(S)$.
\end{itemize}
\end{proposition}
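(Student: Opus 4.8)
The plan is to reduce all three tasks to a small set of operations on zones, represented as difference-bound matrices (DBMs), together with the fact that only finitely many symbolic states arise once we quotient by the region equivalence (or, operationally, normalize and extrapolate zones with respect to the largest constant occurring in $\aut$). The primitives I rely on are all DBM-computable and keep zones closed: intersection with a guard $g \in G(\clocks)$, resetting a set $\lambda \subseteq \clocks$ of clocks to $0$ together with its inverse, forward time-elapse (the ``up'' operator, and its specialization to a fixed rational delay $d$ obtained by adding $d$ to every clock), and backward time-elapse (the ``down'' operator). I would treat the three bullets in increasing order of difficulty.

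For the reach set $\reachset{\aut}(\rho,t)$ with $\rho = (\sigma_1,\tau_1)\cdots(\sigma_n,\tau_n)$, I would compute forward, one letter at a time. Start from the symbolic states $(q_0, Z_0)$ with $q_0 \in Q_0$ and $Z_0$ the point zone assigning $0$ to every clock. For step $i$, since $\rho$ fixes the event time $\tau_i$, the elapsed time is the concrete rational $d_i = \tau_i - \tau_{i-1}$; I therefore shift the current zone by $d_i$ and, for every transition $(q,q',\sigma_i,\lambda,g) \in \Delta$, intersect with $g$ and apply the reset of $\lambda$, collecting one symbolic state per reachable location. After processing all of $\rho$, I shift the resulting zones once more by the rational $t-\tau(\rho)$ to account for the current timepoint. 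The output is a finite union of symbolic states, and because all input timepoints are rational the zone constants stay in $\nnrats$, which is exactly what zones permit.

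For $\textit{Pre}^*_{\aut}(S)$ with $S$ a finite union of symbolic states, I would compute the least fixpoint of the backward step that adds, to the current set, the discrete predecessors under each transition (the DBM-inverse of ``guard-intersect-then-reset'') and the time-predecessors (the backward time-elapse operator, which realizes the ``$+t$ for some $t \in \nnreals$'' appearing in the definition). Termination is the only thing to argue: the backward operations never introduce constants larger than those already present in $\aut$ or in $S$, so after normalization only finitely many zones per location are reachable and the fixpoint stabilizes after finitely many rounds.

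The main obstacle is the first bullet, computing $\nestates{\aut}$, because it alone must cope with the infinitary acceptance condition. Here I would build the finite symbolic (zone) graph of $\aut$ and identify the symbolic states lying on an accepting cycle: for a TBA this is a reachable cycle of the zone graph visiting an accepting location, detected by an SCC computation, and for a TMA a cycle whose set of infinitely-visited locations is a member of $\acc$. Let $C$ be the union of zones covered by such cycles; a state then has nonempty language exactly when it can reach $C$, so $\nestates{\aut} = \textit{Pre}^*_{\aut}(C)$, reusing the third algorithm. The delicate point, and the part needing the most care, is ensuring that a detected symbolic cycle corresponds to a \emph{genuine} infinite accepting run: one must guarantee that time can be made to progress along the cycle, ruling out spurious cycles along which only bounded time elapses, which is handled by augmenting the region/zone graph with the standard time-progress bookkeeping before the cycle search. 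Once $C$ is available as a finite union of zones, the final $\textit{Pre}^*$ closure yields $\nestates{\aut}$ as a finite union of symbolic states, as required.
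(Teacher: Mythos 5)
The paper does not prove this proposition at all: it is imported wholesale from the cited prior work \cite{DBLP:conf/formats/GrosenKLZ22}, so there is no in-paper argument to compare yours against. Judged on its own, your reconstruction follows the standard zone/DBM machinery and the second and third bullets are fine: the forward, letter-by-letter computation with the concrete delays $d_i=\tau_i-\tau_{i-1}$ and the final shift by $t-\tau(\rho)$ matches the definition of $\reachset{\aut}(\rho,t)$ exactly, and the backward fixpoint for $\textit{Pre}^*$ (discrete predecessor plus the ``down'' operator, which indeed realizes the $v'+t$ in the definition) terminates by the usual bound on the constants appearing in backward zones.

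The gap is in the first bullet, and you have mislocated the delicate point. You propose to detect accepting cycles in the \emph{zone} graph by an SCC computation and take $C$ to be the zones they cover. But the extrapolated zone graph is only a (time-abstract) simulation of the concrete semantics, not a bisimulation: a cycle among symbolic states does not in general concretize to an infinite run, because a concrete state in $Z_0$ need not have a successor inside the guard leading back into $Z_1$, only \emph{some} state of $Z_0$ does, and iterating this choice around the loop can fail to close up. This is precisely why Büchi/Muller emptiness for timed automata is classically done on the \emph{region} graph (where region equivalence is a bisimulation and every symbolic cycle is concretizable), and why liveness-sound zone abstractions require a separate correctness argument. The issue you do flag, Zeno cycles along which only bounded time elapses, is actually moot under this paper's definitions, since acceptance does not require time divergence; the concretizability of symbolic cycles is the issue that is not moot. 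A second, smaller point: even once $C$ is correct, $\nestates{\aut}$ is the set of states from which $C$ is reachable \emph{and} from which the run can then traverse the cycle forever, so you should take $C$ to be the set of states actually lying on a concretizable accepting lasso (e.g.\ regions in a suitable SCC of the region graph) before applying $\textit{Pre}^*$; with the region graph this is exactly the standard construction, and your reduction $\nestates{\aut}=\textit{Pre}^*_{\aut}(C)$ then goes through.
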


We continue by showing that the non-empty language states can also be computed for TMA, relying on an analysis of Alur and Dill's translation of TMA into equivalent TBA~\cite{alur1994tba}. 
Here, we just present the parts of their construction we need to prove our results.

\begin{lemma}
\label{lemma:tmaalgorithms}
There is a zone-based algorithm that, given a TMA, computes its nonempty language states.
\end{lemma}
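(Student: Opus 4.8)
The plan is to reduce the problem to the TBA case, which is already solved by Proposition~\ref{lemma:algorithms}. By Alur and Dill's translation~\cite{alur1994tba}, the given TMA~$\aut = (Q, Q_0, \Sigma, \clocks, \Delta, \mathcal{F})$ can be turned into an equivalent TBA~$\autp$ whose locations have the form~$Q \times \mathcal{F} \times \{0, 1, \ldots, |Q|\}$ and which uses the same clocks~$\clocks$. Crucially, this is a product construction that leaves the clocks untouched: every transition of $\autp$ projects, on its location component, to a transition of $\aut$ with identical guard and reset set. I would first run the zone-based algorithm of Proposition~\ref{lemma:algorithms} on $\autp$ to obtain $\nestates{\autp}$ as a finite union of symbolic states~$((q, F, i), Z)$.

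Let $\pi$ denote the projection mapping a state~$((q, F, i), v)$ of $\autp$ to the state~$(q, v)$ of $\aut$, forgetting the Muller set~$F$ and the counter~$i$. The heart of the proof is the claim
\[ \nestates{\aut} = \pi\big(\nestates{\autp}\big). \]
For the inclusion from right to left (soundness), consider an accepting run of $\autp$ witnessing $((q, F, i), v) \in \nestates{\autp}$. Since $\autp$ and $\aut$ share clocks and agree on transitions up to the location projection, this run projects to a run~$r$ of $\aut$ from $(q, v)$ over the same timed word, and the defining property of the construction guarantees that a Büchi-accepting run of $\autp$ projects to a run with $\textit{inf}(r) \in \mathcal{F}$, i.e.\ a Muller-accepting run. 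Hence $(q, v) \in \nestates{\aut}$. For the converse inclusion (completeness), take an accepting run~$r$ of $\aut$ from $(q, v)$ with $\textit{inf}(r) = F \in \mathcal{F}$, and lift it to a run of $\autp$ that guesses~$F$ and uses the counter to cycle through the locations of~$F$, witnessing $((q, F, i), v) \in \nestates{\autp}$ for a suitable starting value~$i$; its projection is $(q, v)$.

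Given the claim, the algorithm is immediate: compute $\nestates{\autp}$ as a finite union of symbolic states, then discard the~$F$ and~$i$ components of each location and collect the resulting zones by location of~$\aut$. Forgetting a discrete component of a symbolic state preserves its zone, so $\pi(\nestates{\autp})$ is again a finite union of symbolic states, and hence $\nestates{\aut}$ is computed in a zone-based manner. The main obstacle is the completeness direction: an accepting run of $\aut$ may visit locations outside $\textit{inf}(r) = F$ during a finite prefix before settling into the infinitely-recurring set, so the lifted run of $\autp$ must commit to~$F$ only after this prefix, and I must verify, by inspecting the relevant part of Alur and Dill's construction, that the counter mechanism tolerates such a prefix and that a valid starting location~$((q, F, i), v)$ exists. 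Since we only need non-emptiness, we are free to choose~$F$, the counter value~$i$, and the witnessing word, which provides enough slack to carry this out.
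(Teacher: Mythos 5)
Your proposal is correct and follows essentially the same route as the paper: reduce to the TBA case via Alur and Dill's translation (which preserves clocks, guards, and resets), compute $\nestates{\autp}$ with the zone-based algorithm of Proposition~\ref{lemma:algorithms}, and project away the discrete Muller-set and counter components while leaving the zones untouched. The only cosmetic difference is that the paper projects just the counter-$0$ states, invoking the fact that $\lang(\aut_F', ((q,0),v)) = \lang(\aut_F,(q,v))$, whereas you project all states and justify the extra (counter $\geq 1$) cases with the soundness argument you sketch; both variants are valid, and the completeness step you flag as needing verification is exactly the cited property of the construction.
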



\begin{proof}
Fix a TMA $\aut = (Q, Q_0, \Sigma, \clocks, \Delta, \mathcal{F})$. 
The translation from TMA to TBA relies on the fact that $\lang(\aut) = \bigcup_{F \in \mathcal{F}} \lang(\aut_F)$, where $\aut_F$ is the TMA~$(Q, Q_0, \Sigma, \clocks, \Delta, \{F\}\}$.
Alur and Dill translated each such $\aut_F$ into an equivalent TBA~$\aut_F'$ with set~$Q \times \{0,1,\ldots, |F|\}$ of locations and the same clocks as $\aut$. 
Then, the disjoint union of the $\aut_F'$ for $F \in \mathcal{F}$ is a TBA that is equivalent to $\aut$.

The TBA~$\aut_F'$ constructed by Alur and Dill satisfy the following fact: $\lang(\aut_F', ((q, 0), v)) = \lang(\aut_F, (q, v))$ for all locations~$q$ of $\aut$ and all clock valuations.
Thus, we have
\[\nestates{\aut} = \bigcup\nolimits_{F \in \mathcal{F}} \{(q, v) \mid ((q,0),v) \in \nestates{\aut_F'}\}.\]
Hence, we can symbolically compute the non-empty language states of $\aut$ by symbolically computing the non-empty states of the $\aut_F'$ and then project the states of the form~$(q,0)$ to $q$, but leaving the zones unchanged.
\end{proof}


Using this result, we can decide monitorability.

\begin{theorem}
Strong and weak $(\rho, t)$-monitorability are decidable for properties given by DTMA.
\end{theorem}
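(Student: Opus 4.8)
The plan is to reduce both monitorability questions to reachability and language-emptiness checks on a DTMA and its complement, exploiting the fact that DTMA are closed under complement (so we have effective access to an automaton for $\infwords \setminus \varphi$). First I would observe that, because $\aut$ is deterministic, for any observation $(\rho,t)$ the reach set $\reachset{\aut}(\rho,t)$ is a \emph{single} state $(q,v)$ (the unique state reached by reading $\rho$ and letting time advance to $t$), and similarly for the complement automaton $\comp{\aut}$. The verdict $\monitoring{\varphi}(\rho',t')$ for an extension can then be characterized purely in terms of emptiness of the languages starting from these reached states: $\monitoring{\varphi}(\rho',t') = \top$ iff the state reached in $\comp{\aut}$ has empty language, $\bot$ iff the state reached in $\aut$ has empty language, and $\unknown$ iff both have nonempty language. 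This is exactly where Lemma~\ref{lemma:tmaalgorithms} enters, since it lets us compute $\nestates{\aut}$ and $\nestates{\comp{\aut}}$ symbolically as finite unions of symbolic states.

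Next I would translate the two monitorability definitions into symbolic-reachability statements. For \textbf{weak} $(\rho,t)$-monitorability we must decide whether some extension yields a conclusive verdict, i.e.\ whether starting from $\reachset{\aut}(\rho,t)$ (and the parallel state in $\comp{\aut}$) one can reach, via some finite timed word, a configuration in which at least one of the two component automata has entered an empty-language state. I would phrase this as: does the reach set of $(\rho,t)$ intersect $\textit{Pre}^*$ of the "conclusive" region? Concretely, form the product (or run both automata in lockstep, which is trivial here since $\rho$ determines both runs), mark as conclusive any symbolic state where the $\aut$-component lies outside $\nestates{\aut}$ or the $\comp{\aut}$-component lies outside $\nestates{\comp{\aut}}$, and test whether $\reachset{\aut\times\comp{\aut}}(\rho,t) \cap \textit{Pre}^*(\text{conclusive}) \neq \emptyset$. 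Both $\nestates{}$ and $\textit{Pre}^*$ are computable as zones by Proposition~\ref{lemma:algorithms} and Lemma~\ref{lemma:tmaalgorithms}, and intersection/emptiness of unions of zones is decidable, so weak monitorability is decidable.

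For \textbf{strong} $(\rho,t)$-monitorability the condition is a $\forall\rho'\,\exists\rho''$ alternation: \emph{every} extension can be further extended to a conclusive verdict. Equivalently, from $\reachset{\aut}(\rho,t)$ no extension can reach a state from which the conclusive region is \emph{un}reachable, i.e.\ every state reachable from the current reach set must itself lie in $\textit{Pre}^*(\text{conclusive})$. So I would characterize strong monitorability as: the set of states reachable from $\reachset{\aut\times\comp{\aut}}(\rho,t)$ is contained in $\textit{Pre}^*(\text{conclusive})$. Using that $\textit{Pre}^*$ of a zone union is a computable zone union, and that the complement of a zone union is again a zone union (closure of zones under Boolean operations), this containment is a decidable emptiness check: test whether the forward-reachable set of the current configuration intersects the complement of $\textit{Pre}^*(\text{conclusive})$. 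The main obstacle I anticipate is the careful bookkeeping to ensure the symbolic operations compose correctly over both the automaton and its complement simultaneously — in particular arguing that the single-state determinacy of the reach set is preserved, and that "reachable via some finite word from $\reachset{}(\rho,t)$" and "$\textit{Pre}^*$" are set up as genuine converses so that the $\forall\exists$ quantifier structure of strong monitorability collapses to the stated $\textit{Pre}^*$-containment. Once those symbolic manipulations are justified via the zone algorithms already established, decidability follows for both problems.
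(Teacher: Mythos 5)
Your proposal is correct and follows essentially the same route as the paper: exploit determinism to make the reach set a singleton, compute the non-empty language states of $\aut$ and of its complement (which shares the same locations, clocks, and transitions, so no genuine product is needed), characterize conclusive verdicts by membership in the empty-language states, and decide weak monitorability via $\textit{Pre}^*$ of the conclusive region. The only cosmetic difference is in the strong case, where you test containment of the forward-reachable set in $\textit{Pre}^*(\text{conclusive})$ while the paper equivalently tests non-membership in $\textit{Pre}^*$ of the complement of that set, i.e., a second backward-reachability step using the same zone machinery.
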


\begin{proof}
    We assume w.l.o.g.\ that the DTMA we are using are complete in the sense that every word has a run, which then must be unique due to determinism. 
    This can always be achieved by adding a fresh sink location and by rerouting all missing transitions to it (while preserving determinism).
    Then, the reach-set~$\reachset\aut(\rho, t)$ of any observation~$(\rho, t)$ is a singleton set, as $\aut$ is deterministic and complete.
    Thus, we identify $\reachset\aut(\rho, t)$ with the unique state in it.

    Now, given $\aut$, we first compute the set~$\nestates{\aut}$ of non-empty language states (see Lemma~\ref{lemma:tmaalgorithms}).
    Its complement is the set of empty language states, i.e., the set of states for which there is no accepting run starting there. Let us denote this set as~$\emptystates{\aut} = \comp{\nestates{\aut}}$ and note that we have $\reachset\aut(\rho,t) \in \emptystates{\aut} $ if and only if $ \monitoring{L(\aut)}(\rho,t) = \bot$, i.e., we can characterize the negative verdict in terms of the set~$\emptystates{\aut}$.

    We now use backwards reachability to compute the set of states that can reach $\emptystates{\aut}$. These are the states from where it is possible to reach a $\bot$-verdict. The sets \nestates{\aut}, \emptystates{\aut} and $\textit{Pre}^*_\aut(\emptystates{\aut})$ are illustrated in Fig.~\ref{fig:states}, where $\textit{Pre}^*_\aut(\emptystates{\aut})$ is the gray area.

    By definition, we have $\reachset\aut(\rho, t) \in \textit{Pre}^*_\aut(\emptystates{\aut})$ if and only if there exists a $\rho' \in \finwords$ and a $t' \in \nnreals$ such that $s_0 \xrightarrow{\rho\conc{t}\rho'} (q,v)$ with $(q,v+t') \in \emptystates{\aut}$, where $s_0$ denotes the unique initial state of $\aut$.
    The latter condition is equivalent to the existence of a $\rho' \in \finwords$ such that $s_0 \xrightarrow{\rho\conc{t}\rho'} s'$ for some $s' \in \emptystates{\aut}$.
    Hence, $\reachset\aut(\rho, t) \in \textit{Pre}^*_\aut(\emptystates{\aut})$ if and only if there exists a  $\rho' \in \finwords$  such that $\monitoring{L(\aut)}(\rho \conc{t} \rho') = \bot$.
    Thus, $\reachset\aut(\rho, t) \notin \textit{Pre}^*_\aut(\emptystates{\aut})$ if and only if monitoring any extension of $(\rho, t)$ will not provide the verdict~$\bot$, i.e., $\monitoring{L(\aut)}(\rho \conc{t} \rho') \neq \bot$ for all $\rho' \in \finwords$.

    \begin{figure}[htb]
        \centering
        \begin{tikzpicture}
            \draw[thick] (2,0) rectangle (6, 1.5);
            
            \draw[thick] (0,0) rectangle (2, 1.5);

            \node[circle, fill, inner sep=2pt] at (0.5, 0.4) {};
            \node[circle, fill, inner sep=2pt] at (1.2, 1.25) {};
            \node[circle, fill, inner sep=2pt] at (2.7, 1.3) {};
            \node[circle, fill, inner sep=2pt] at (2.5, 0.2) {};
            \node[circle, fill, inner sep=2pt] at (3.5, 0.3) {};
            
            \node[circle, fill, inner sep=2pt] at (4.7, 0.5) {};
            \node[circle, fill, inner sep=2pt] at (5.4, 1.27) {};
            \node[circle, fill, inner sep=2pt] at (5, 1.2) {};
            \node[circle, fill, inner sep=2pt] at (5.7, 0.8) {};
     
            \node[circle, fill, inner sep=2pt] (s1) at (1.4, 0.8) {};
            \node[circle, fill, inner sep=2pt] (s2) at (3, 0.6) {};
            \node[circle, fill, inner sep=2pt] (s3) at (4.3, 0.8) {};

            \draw[->, > = stealth, very thick, dashed] (s1) edge [bend left] (s2);
            \draw[->, > = stealth, very thick] (s2) edge [bend left] (s1);

            \node[fill=white, rectangle, anchor=north] at (4, 1.5) {$\nestates{\aut}$};
            \node[fill=white, rectangle, anchor=north west] at (0, 1.5) {$\emptystates{\aut}$};

            \draw[very thick] (0,0) rectangle (6, 1.5);
            
  
        \end{tikzpicture}
        \hfill
        \begin{tikzpicture}    
            \draw[thick, fill=gray!20] (0,0) -- (4, 0) -- (3, 1.5) -- (0, 1.5) -- cycle;
            \draw[thick] (2,0) rectangle (6, 1.5);
            
            \draw[thick] (0,0) rectangle (2, 1.5);

            \node[circle, fill, inner sep=2pt] at (0.5, 0.4) {};
            \node[circle, fill, inner sep=2pt] at (1.2, 1.25) {};
            \node[circle, fill, inner sep=2pt] at (2.7, 1.3) {};
            \node[circle, fill, inner sep=2pt] at (2.5, 0.2) {};
            \node[circle, fill, inner sep=2pt] at (3.5, 0.3) {};
            
            \node[circle, fill, inner sep=2pt] at (4.7, 0.5) {};
            \node[circle, fill, inner sep=2pt] at (5.4, 1.27) {};
            \node[circle, fill, inner sep=2pt] at (5, 1.2) {};
            \node[circle, fill, inner sep=2pt] at (5.7, 0.8) {};

            \node[circle, fill, inner sep=2pt] (s1) at (1.4, 0.8) {};
            \node[circle, fill, inner sep=2pt] (s2) at (3, 0.6) {};
            \node[circle, fill, inner sep=2pt] (s3) at (4.3, 0.8) {};

            \draw[->, > = stealth, very thick, dashed] (s3) edge [bend left] (s2);
            \draw[->, > = stealth, very thick] (s2) edge [bend left] (s3);

            \draw[->, > = stealth, very thick, dashed] (s1) edge [bend left] (s2);
            \draw[->, > = stealth, very thick] (s2) edge [bend left] (s1);
            
            \node[ rectangle, anchor=north] at (4, 1.5) {$\nestates{\aut}$};
            \node[ rectangle, anchor=north west] at (0, 1.5) {$\emptystates{\aut}$};

            \draw[very thick] (0,0) rectangle (6, 1.5);
            
            
        \end{tikzpicture}
        \caption{
        Representation of all states of an automaton \aut. On the left, only the empty language states \emptystates{\aut} and non-empty language states \nestates{\aut} are shown. On the right, the states in $\textit{Pre}^*_\aut(\emptystates{\aut})$ are gray. 
        The solid arrows are examples of possible transitions and the dashed arrows are examples of impossible transitions.
        A state in \emptystates{\aut} has no accepting run, thus cannot reach \nestates{\aut}. A state in \nestates{\aut} might reach a state in \emptystates{\aut}. A state outside $\textit{Pre}^*_\aut(\emptystates{\aut})$ cannot reach~\emptystates{\aut}.
        }
        \label{fig:states}
    \end{figure}

    Since DTMA are complementable, we can compute $\emptystates{\comp\aut}$ for a complement automaton~$\comp\aut$ of $\aut$ and provide the dual characterization for the $\top$-verdict: $\reachset{\comp\aut}(\rho, t) \notin \textit{Pre}^*_{\comp\aut}(\emptystates{\comp\aut})$ if and only if $\monitoring{L(\aut)}(\rho \conc{t} \rho') \neq \top$ for all $\rho' \in \finwords$.
    Here, we rely on the fact that we have $\monitoring{L}(\rho, t) = \top$ if and only if $\monitoring{\comp{L}}(\rho, t) = \bot$ and $\monitoring{L}(\rho, t) = \bot$ if and only if $\monitoring{\comp{L}}(\rho, t) = \top$.

    Now, recall that we can complement DTMA by only changing the acceptance condition, i.e., we can assume w.l.o.g.\ that $\aut$ and $\comp\aut$ have the same locations, clocks, and transitions, and  thus we have $\reachset\aut(\rho, t) = \reachset{\comp\aut}(\rho,t)$ for all $(\rho, t)$.

    Hence, using the above characterizations, we have 
$\reachset\aut(\rho, t) \in \emptystates{\aut} \cup \emptystates{\comp\aut}$  if and only if $ \monitoring{L(\aut)}(\rho,t)\in \{\top, \bot\}$
and thus
$ \reachset\aut(\rho, t) \in \textit{Pre}^*_\aut(\emptystates{\aut} \cup \emptystates{\comp\aut})$ if and only if
there exists a $\rho' \in \finwords$ such that $\monitoring{L(\aut)}(\rho \conc{t} \rho')\in \{\top, \bot\}$.
Thus, $L(\aut)$ is weakly $(\rho,t)$-monitorable if and only if
    \[\reachset\aut(\rho,t) \in \textit{Pre}^*_\aut(\emptystates{\aut} \cup \emptystates{\comp\aut}).\]

    For strong monitorability, we need one extra step corresponding to the additional quantifier in the definition.
    By complementing $\textit{Pre}^*_\aut(\emptystates{\aut} \cup \emptystates{\comp\aut})$, we obtain the set of states for which the monitoring verdict will forever be $\unknown$: we have 
    $\reachset\aut(\rho,t) \in \comp{\textit{Pre}^*_\aut(\emptystates{\aut} \cup \emptystates{\comp\aut})}$ if and only if 
    $\monitoring{L(\aut)}(\rho \conc{t} \rho') = \unknown$ for all $\rho' \in \finwords$.

    Then, by using backwards reachability another time, we can compute the set of states from where it is possible to reach a state from which only the verdict~$\unknown$ can be given: we have $\reachset\aut(\rho,t ) \in \textit{Pre}^*_\aut(\comp{\textit{Pre}^*_\aut(\emptystates{\aut} \cup \emptystates{\comp\aut})})$ if and only if  there exists a $\rho' \in \finwords$ such that for all  $\rho'' \in \finwords$, we have $\monitoring{L(\aut)}(\rho\conc{t}\rho'\conc{}\rho'')=\unknown$.
    Thus, $L(\aut)$ is strongly $(\rho,t)$-monitorable if and only if \[\reachset\aut(\rho,t) \notin \textit{Pre}^*_\aut(\comp{\textit{Pre}^*_\aut(\emptystates{\aut} \cup \emptystates{\comp\aut})}).\]  

    Due to Lemma~\ref{lemma:algorithms} and Lemma~\ref{lemma:tmaalgorithms}, both the characterization of weak $(\rho,t)$-monitorability and the one of strong $(\rho,t)$-monitorability are effectively decidable.
\end{proof}

Due to Remark~\ref{rem_properties}, we also obtain the decidability of strong and weak monitorability.

\begin{corollary}
    Strong and weak monitorability are decidable for properties given by DTMA.
\end{corollary}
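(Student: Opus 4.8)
The plan is to observe that, by Definition~\ref{def:monitorability}, strong and weak monitorability are precisely the special cases of strong and weak $(\rho,t)$-monitorability obtained by fixing the observation to $(\varepsilon, 0)$. Since the preceding theorem decides strong and weak $(\rho,t)$-monitorability for an \emph{arbitrary} input observation $(\rho,t)$, the corollary should follow simply by instantiating that decision procedure at this particular observation. Equivalently, one may invoke the general observation recorded before Section~\ref{sec:tba}: decidability of the observation-parametrized problem (Problems~\ref{decprop_strongvar} and~\ref{decprop_weakvar}) entails decidability of the plain monitorability problem (Problems~\ref{decprop_strong} and~\ref{decprop_weak}) for the same class of properties.

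First I would check that $(\varepsilon, 0)$ is a legitimate input: the empty timed word~$\varepsilon$ contains no timepoints and satisfies $\tau(\varepsilon) = 0 \le 0$, and $0$ is rational, so $(\varepsilon, 0)$ meets all the requirements placed on observations. In particular, under the completeness and determinism assumptions of the theorem, the reach set~$\reachset{\aut}(\varepsilon, 0)$ is well-defined and equals the singleton consisting of the unique initial state~$s_0$ of $\aut$. I would then feed $(\varepsilon, 0)$ into the two membership characterizations established in the theorem: weak monitorability of $L(\aut)$ reduces to testing whether $s_0 = \reachset{\aut}(\varepsilon, 0) \in \textit{Pre}^*_\aut(\emptystates{\aut} \cup \emptystates{\comp\aut})$, and strong monitorability to testing whether $s_0 \notin \textit{Pre}^*_\aut(\comp{\textit{Pre}^*_\aut(\emptystates{\aut} \cup \emptystates{\comp\aut})})$. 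Every ingredient of these tests---the non-empty language states of $\aut$ and of its complement $\comp{\aut}$ via Lemma~\ref{lemma:tmaalgorithms}, their complements, the reach set, and the (at most two) applications of $\textit{Pre}^*$ via Lemma~\ref{lemma:algorithms}---is effectively computable, so both membership queries are decidable.

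Since the entire computational content is already carried by the theorem, there is no genuine obstacle here; the corollary is essentially immediate. The only point requiring (trivial) care is confirming the definitional reduction of monitorability to $(\varepsilon,0)$-monitorability together with the admissibility of $(\varepsilon,0)$ as an input to the algorithms, which are exactly the two items verified above.
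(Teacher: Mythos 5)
Your proposal is correct and matches the paper's own derivation: the corollary is obtained by specializing the preceding theorem's decision procedure to the observation $(\varepsilon,0)$, which is exactly the definitional reduction of (strong/weak) monitorability to $(\varepsilon,0)$-monitorability noted after the list of decision problems in Section~\ref{sec:prelims}. Your extra check that $(\varepsilon,0)$ is an admissible input is harmless but not needed beyond what the paper already assumes.
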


\section{Monitorability with Step-Bounded Horizons}
\label{sec:bmtba}
Strong monitorability of a property ensures that at any time during monitoring, the observation $(\rho,t)$ made so far can be extended by some finite $\rho'$ such that a conclusive verdict can be made after $\rho \conc{t} \rho'$.  Thus, in the setting of strong monitorability, it is always meaningful to keep monitoring.
In contrast, for weakly monitorable properties, the ability to make a conclusive verdict after some future (finite) monitoring is not guaranteed.  Ideally, we would like to refine the rather uninformative verdict~$\unknown$ with guaranteed minimum bounds on the number of future events before a positive or negative verdict can be made. In case both these bounds are $\infty$, further monitoring is useless as the verdicts reported will always be $\unknown$.
Here it is prudent to distinguish between the two definitive verdicts, as they have different decidability properties.

\begin{definition}[Weak Monitorability with Step-Bounded Horizons]\label{def:wbmonitorability}
Fix a property~$\varphi \subseteq \infwords$, an observation~$(\rho, t)$, and a bound $n\in\nats$. 
\begin{itemize}

    \item $\varphi$ is bounded weakly $\top$-$(\rho, t)$-monitorable with respect to $n$ if and only if
\[
    \text{ there exists } \rho' \in T\Sigma^{\leq n} \text{ such that } \monitoring{\varphi}(\rho\conc{t}\rho') = \top.
  \]

  \item $\varphi$ is bounded weakly $\bot$-$(\rho, t)$-monitorable with respect to $n$ if and only if
\[
    \text{ there exists }  \rho' \in T\Sigma^{\leq n} \text{ such that } \monitoring{\varphi}(\rho\conc{t}\rho') =\bot.
  \]

  \item $\varphi$ is bounded weakly $(\rho, t)$-monitorable with respect to $n$ if and only if
\[
    \text{ there exists }  \rho' \in T\Sigma^{\leq n} \text{ such that } \monitoring{\varphi}(\rho\conc{t}\rho') \in \{\top,\bot\}.
  \]

\item $\varphi$ is bounded weakly $\top$-monitorable with respect to $n$ if it is bounded weakly $\top$-$(\varepsilon, 0)$-monitorable with respect to $n$.

\item $\varphi$ is bounded weakly $\bot$-monitorable with respect to $n$ if it is bounded weakly $\bot$-$(\varepsilon, 0)$-monitorable with respect to $n$.

\item $\varphi$ is bounded weakly monitorable with respect to $n$ if it is bounded weakly $(\varepsilon, 0)$-monitorable with respect to $n$.

\end{itemize}
\end{definition}

In the following, we show that bounded weak monitorability and bounded weak $\top$-monitorability are undecidable, but that bounded weak $\bot$-monitorability is decidable. 

\begin{theorem}
\label{thm:topundec}
Bounded weak $\top$-monitorability
is undecidable for properties given by TBA.
\end{theorem}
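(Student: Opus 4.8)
The plan is to reduce an undecidable problem about timed automata to bounded weak $\top$-monitorability with respect to a fixed bound~$n$. The natural target is $n=0$: asking whether $\varphi$ is bounded weakly $\top$-monitorable with respect to $0$ means asking whether there exists $\rho' \in T\Sigma^{\le 0}$ (i.e.\ $\rho' = \emptyword$) such that $\monitoring{\varphi}(\rho \conc{t} \rho') = \top$. For the standard instance with $(\rho,t) = (\emptyword,0)$, this collapses to asking whether $\monitoring{\varphi}(\emptyword,0) = \top$, which by the argument in the proof of Theorem~\ref{thm_monitpringundecidableforNTBA} holds if and only if $\varphi = \infwords$. Thus bounded weak $\top$-monitorability with respect to $0$ for TBA is exactly TBA universality, which is undecidable~\cite{alur1994tba}.

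However, a cleaner and more robust approach, which also matches the structure used for weak monitorability in Theorem~\ref{thm:undecmoni}, is to reduce from the existence of a universal prefix (equivalently, weak $\top$-monitorability), since we already established in that proof that TBA universality reduces to the existence of a universal prefix. First I would recall that $\varphi$ is weakly $\top$-monitorable exactly when it has a universal prefix, i.e.\ when there is some $\rho$ with $\monitoring{\varphi}(\rho) = \top$, but with \emph{no} bound on the length of $\rho$. The task is then to show that bounding this length does not make the problem easier. The key obstacle is precisely this unboundedness: the witnessing prefix in weak $\top$-monitorability may be arbitrarily long, so a direct reduction to a \emph{fixed}-$n$ instance is not immediate.

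To overcome this, I would use the restart-construction already built in the proof of Theorem~\ref{thm:undecmoni} (Figure~\ref{fig:undec-automata_weak}), where every location can return to an initial location via a $\Bsymbol$-transition resetting all clocks. In that construction, $\lang(\aut) = \infwords$ if and only if the empty prefix~$\emptyword$ is universal for $\autp$; more importantly, if $\autp$ has \emph{any} universal prefix, then $\emptyword$ itself is already universal, because the $\Bsymbol$-restart lets every prefix simulate the empty one. Consequently, $\autp$ is bounded weakly $\top$-monitorable with respect to $0$ if and only if it is weakly $\top$-monitorable, which in turn holds if and only if $\lang(\aut) = \infwords$. Since universality of TBA is undecidable, this establishes the claim for the fixed bound $n=0$, and the monotonicity of the bounded notions in $n$ then extends undecidability to every fixed $n$.

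The main obstacle, as noted, is controlling the length of the witnessing prefix; the restart-gadget resolves it by forcing universality of the whole automaton to already manifest at the empty prefix. I would therefore present the proof by invoking the TBA~$\autp$ from Figure~\ref{fig:undec-automata_weak}, observing that $\monitoring{L(\autp)}(\emptyword,0) = \top$ iff $\emptyword$ is a universal prefix of $\autp$ iff $\lang(\aut) = \infwords$, and concluding undecidability from that of TBA universality. A careful step is verifying that $\monitoring{L(\autp)}(\emptyword,0) = \top$ is equivalent to $\lang(\autp) = \infwords_\Bsymbol$, and that the latter coincides with $\emptyword$ being universal, which both follow directly from the definitions established earlier in the excerpt.
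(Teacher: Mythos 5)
Your first paragraph is exactly the paper's proof: the paper observes that bounded weak $\top$-monitorability with respect to the bound $n=0$ asks precisely whether $\monitoring{\varphi}(\emptyword,0)=\top$, i.e., whether $\varphi=\infwords$, so the problem inherits undecidability from TBA universality. The additional restart-gadget route you sketch is not needed for the theorem as stated (the instance with $n=0$ already suffices to make the general problem undecidable), but it is correct and buys something the paper's one-liner does not: undecidability for \emph{every} fixed bound $n$. The reason it works is the observation you make explicitly, that for the automaton $\autp$ of Figure~\ref{fig:undec-automata_weak} the existence of \emph{any} universal prefix already forces $\emptyword$ to be universal, so the answer to the bounded question is independent of $n$. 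Be careful with the phrasing ``the monotonicity of the bounded notions in $n$ then extends undecidability to every fixed $n$'': monotonicity alone does not transfer undecidability between different fixed bounds (a priori the $n$-bounded problem could be easier or harder than the $0$-bounded one); what does the work is precisely the independence of the answer from $n$ for the constructed $\autp$, which you should cite as the reason rather than monotonicity.
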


\begin{proof}
Let  $\varphi$ be a property given by a TBA $\aut$. Then universality of $\aut$ is equivalent to bounded weak $\top$-monitorability of $\varphi$ for the bound $n=0$. As universality for timed automata is undecidable~\cite{alur1994tba}, $0$-bounded weak $\top$-monitorability for TBA properties is undecidable.
\end{proof}

Next, we show that weak $\bot$-$(\rho,t)$-monitorability behaves differently. 

\begin{theorem}
\label{thm:botdec}
Bounded weak $\bot$-$(\rho, t)$-monitorability
is decidable for properties given by TBA.
\end{theorem}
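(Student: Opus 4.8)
The plan is to reduce bounded weak $\bot$-$(\rho,t)$-monitorability to a bounded reachability question over a determinized (subset) automaton, exploiting that the $\bot$-verdict, unlike the $\top$-verdict, refers only to the language emptiness of states of $\aut$ itself and hence never requires complementing the TBA. Concretely, observe that $\monitoring{L(\aut)}(\rho \conc{t} \rho') = \bot$ holds exactly when no state reachable over $\rho \conc{t} \rho'$ admits an accepting run, i.e., when $\reachset{\aut}(\rho \conc{t} \rho') \subseteq \emptystates{\aut}$, where $\emptystates{\aut} = \comp{\nestates{\aut}}$. By the first item of Lemma~\ref{lemma:algorithms}, $\nestates{\aut}$, and hence $\emptystates{\aut}$, is computable for a TBA as a finite union of symbolic states. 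This is precisely the point of contrast with Theorem~\ref{thm:topundec}: deciding the $\top$-verdict would require the nonempty-language states of the \emph{complement} of $\aut$, whose construction is impossible in general, whereas the $\bot$-verdict is decided entirely within $\aut$.

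Next I would set up a bounded search over sets of states. For a fixed extension $\rho'$, the reach set $\reachset{\aut}(\rho \conc{t} \rho')$ is uniquely determined, so the quantifier \myquot{there exists $\rho' \in T\Sigma^{\leq n}$} ranges over the subset-construction states reachable from $S_0 := \reachset{\aut}(\rho, t)$ by reading at most $n$ timed letters. I would therefore perform an on-the-fly determinization: the search state is a symbolic reach set (a finite union of symbolic states), and one step consists of choosing a letter $a \in \Sigma$ and a \emph{single common} delay, letting that delay elapse in every component and then applying all matching $a$-transitions. The property is bounded weakly $\bot$-$(\rho,t)$-monitorable with respect to $n$ if and only if some reach set reachable from $S_0$ within $n$ such steps (including the zero-step case $S_0$ itself) is contained in $\emptystates{\aut}$; containment of one finite union of symbolic states in another is decidable by zone operations.

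It remains to argue that this search is finite, for which I would invoke the region abstraction of Alur and Dill~\cite{alur1994tba}. First, $\emptystates{\aut}$ is a union of regions: states in the same region are time-abstract bisimilar, so one admits an accepting run if and only if the other does, making $\nestates{\aut}$ region-closed. Consequently, whether a reach set is contained in $\emptystates{\aut}$ depends only on the set of location-region pairs it meets. The determinization above, carried out over the finite region automaton, is a finite-state subset construction: each step has finitely many symbolically distinct successors (finitely many letters and finitely many region-time-successors), so the tree of reach sets reachable within $n$ steps is finite and effectively constructible; zones serve only as the concrete data structure realizing these finitely many region-definable sets.

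The main obstacle is the determinization step, and specifically the finiteness of the per-step branching. The naive zone-based successor that elapses an arbitrary amount of time computes the union over all delays and is \emph{unsound} here, because it would test whether \emph{all} extensions yield $\bot$ rather than whether \emph{some} extension does. The work therefore lies in defining the successor with a single shared delay and in showing, via region-closedness of $\emptystates{\aut}$ and time-abstract bisimulation, that only finitely many successor reach sets arise up to the region equivalence and that each is computable by zone operations. Once this is in place, the bounded depth~$n$ guarantees termination without any need for the full (in general non-terminating) determinization of $\aut$, and checking the inclusion $\reachset{\aut}(\cdot) \subseteq \emptystates{\aut}$ at every node of the finite search tree decides the problem.
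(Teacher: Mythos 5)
Your starting point is exactly the right one, and it coincides with the paper's: the $\bot$-verdict is characterized by $\reachset{\aut}(\rho\conc{t}\rho') \subseteq \emptystates{\aut}$ where $\emptystates{\aut} = \comp{\nestates{\aut}}$ is computable for the TBA $\aut$ itself, so no complementation is needed -- this is precisely why $\bot$ behaves differently from $\top$. You also correctly diagnose why the naive zone successor is unsound (it conflates ``for some delay'' with ``for all delays'' once runs are aggregated). Where you diverge from the paper is in how the bounded existential over timed extensions is discharged: the paper enumerates the finitely many letter sequences of length $n$ and the finitely many syntactic paths $\pi$, expresses the reach-set condition as $\exists \tau_1\cdots\exists\tau_{m+n}.\,\bigvee_{\sigma}\bigwedge_{\pi}\left(Z_\pi(\tau)\rightarrow v_\pi(\tau)\notin\nestates{\aut}\right)$ with the $\tau_i$ ranging over global time, and invokes Tarski's decidability of the first-order theory of real-closed fields. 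This sidesteps determinization entirely.

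The gap in your route is the finiteness claim for the per-step branching of your bounded subset construction. You justify it by saying that $\emptystates{\aut}$ is region-closed and that states in the same region are time-abstract bisimilar, concluding that a reach set is adequately represented by the set of location--region pairs it meets. That representation is sound for the \emph{containment test} against $\emptystates{\aut}$, but it is not a congruence for the \emph{common-delay successor}: the successors of a reach set under a single shared delay depend on the correlations (relative fractional parts) between the valuations of its different members, which the multiset of individual regions does not record. Concretely, take $S_1=\{(q,x{=}0.3),(q',x{=}0.3)\}$ and $S_2=\{(q,x{=}0.3),(q',x{=}0.7)\}$: both meet the same location--region pairs, yet only from $S_2$ is there a common delay $d$ with $0.3+d<1$ and $0.7+d>1$, so the sets of reachable successor reach sets differ. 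Time-abstract bisimulation is a per-state notion and does not lift to sets under a shared delay. The construction can be repaired -- one must take regions (or zones) over the \emph{joint} valuation of all members of the reach set, i.e., over a product clock set whose size grows with the number of runs, which stays finite for depth $n$ -- but that is exactly the nontrivial content of the theorem, and it is what the paper's $\exists\tau.\bigwedge_\pi$ encoding handles uniformly. As written, your argument asserts rather than establishes this step, so the proof is incomplete at its crux.
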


\begin{proof}
Let $(\rho, t)$ be an observation with $\rho = (\sigma_1,t_1)(\sigma_2, t_2)\cdots  (\sigma_m,t_m)$ and let the property be given by the TBA~$\aut$.

Let $\pi = q_0\xrightarrow{\sigma_1,\lambda_1,g_1}q_1\cdots q_{m+n-1}\xrightarrow{\sigma_{m+n},\lambda_{m+n},g_{m+n}}q_{m+n}$ be a syntactic path of length~$m+n$ of the TBA $\aut$ induced by transitions~$(q_i, q_{i+1},\sigma_i, \lambda_i, g_i)$. 
Also, let $\tau=\{\tau_1,\ldots,\tau_{m+n}\}$ be the set of variables ranging over global time (i.e., $\nnreals$), where $\tau_i$ denotes the time at which the $i$-th transition is taken.  Clearly, $\tau_i=t_i$ for $i=1,\ldots, m$.
Most importantly, there is a zone~$Z_\pi(\tau)$ over $\tau$ that precisely captures when $(\sigma,\tau)$ is a timed word realizing $\pi$. 
In this case, we may also express the resulting clock valuation after realizing $(\sigma,\tau)$ on $\pi$ as $v_\pi(\tau)=(v_\pi^1, \ldots,v_\pi^k)$, where $k$ is the number of clocks of $\aut$ and $v_\pi^i=\tau_{m+n}-\tau_{\ell(i)}$, with $\ell(i)$ being the index of the last transition when the clock $x_i$ was reset. 
Now, if $(q_{m+n},v_\pi(\tau))\not\in\nestates{\aut}$, then the run of $(\sigma,\tau)$ following $\pi$ cannot be extended to an accepting run of $\aut$.  

Witnesses for weak monitorability are closed under extensions.
Hence, in the following construction, we can restrict ourselves w.l.o.g.\ to words of length exactly~$n$ instead of words of length at most~$n$.
The following formula expresses the existence of a timed word~$\rho' = (\sigma',\tau')$ of length~$n$ such that all runs on $\rho \conc{t} \rho'$ end in a state outside of $\nestates{\aut}$, witnessing $\monitoring{\lang(\aut)}(\rho\conc{t}\rho')=\bot$:
\[
\exists \tau_1\ldots\exists\tau_{m+n}.\, 
\bigwedge_{i=1}^m \tau_i = t_i \wedge \tau_m \le t \le \tau_{m+1} \wedge \left(\bigvee_{\sigma_{m+1}\cdots\sigma_{m+n} \in \Sigma^n}
\bigwedge_{\pi \in \mathrm{SP}}
 \left( Z_\pi(\tau) 
\rightarrow v_\pi(\tau)\not\in\nestates{\aut }\right)\right),
\]
where $\mathrm{SP}$ is the set of syntactic paths of the form 
\[\pi=q_0\xrightarrow{g_1,\sigma_1,\lambda_1}q_1\cdots\xrightarrow{g_{m},\sigma_{m},\lambda_{m}}q_m\cdots q_{m+n-1}\xrightarrow{\sigma_{m+n},\lambda_{m+n},g_{m+n}}q_{m+n}.\]
The formula above expresses the existence of values for  $\tau_1,\ldots,\tau_{m+n}$, with $\tau_i=t_i$ for $i=1,\ldots, m$, and $\tau_m\leq t \leq \tau_{m+1}$, ensuring that only extensions of the observation $(\rho,t)$ are considered. The last part of the formula ensures the existence of a timed word $\rho'=(\sigma',\tau')$, where $\sigma'=\sigma_{m+1}\cdots\sigma_{m+n}$ and $\tau'=\tau_{m+1}\cdots\tau_{m+n}$, where all runs -- which are captured by $Z_\pi(\tau)$ -- are outside $\nestates{\aut}$.
The formula is in the first order theory of real-closed fields, which is decidable~\cite{tarski}. Thus, bounded weak $\bot$-monitorability is decidable for properties given by TBA.
\end{proof}

\begin{remark}
It follows from Theorem~\ref{thm:botdec} that for properties where both the property and its complement are given by TBA, bounded weak monitorability is decidable.
Hence, if the property is given by a DTMA, then bounded weak monitorability is decidable. Moreover, one can even compute the tightest bound~$n$ by doing a breadth-first search over the symbolic state graph of the deterministic timed automaton, searching for a (shortest) path to the empty language states.
\end{remark} 

Lastly, we prove bounded weak monitorability undecidable for properties given by TBA.

\begin{theorem}
\label{thm:undecweakbounded}
Bounded weak monitorability
is undecidable for properties given by TBA.
\end{theorem}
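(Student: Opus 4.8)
The plan is to prove undecidability of bounded weak monitorability by reduction from a problem we have already shown undecidable, namely bounded weak $\top$-monitorability for TBA (Theorem~\ref{thm:topundec}), or equivalently universality of TBA. The central difficulty is that bounded weak monitorability combines the two conclusive verdicts with a disjunction: a witness is any $\rho'$ of length at most $n$ giving \emph{either} $\top$ \emph{or} $\bot$. Since bounded weak $\bot$-monitorability is decidable (Theorem~\ref{thm:botdec}), the undecidability must come entirely from the $\top$-part, so the reduction has to force the $\bot$-witnesses to be absent (or to be detectable) while preserving exactly the difficult $\top$-behaviour.

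The key idea I would use is the same device as in the second half of the proof of Theorem~\ref{thm:undecmoni}: manipulate the input automaton so that the verdict~$\bot$ can never occur, i.e., every finite observation can still be extended to an accepting word. Concretely, starting from a TBA~$\aut$ for $\varphi$, I would build $\autp$ over $\Sigma_\Bsymbol = \Sigma \cup \{\Bsymbol\}$ by adding a fresh accepting sink location~$r$ reachable by a $\Bsymbol$-transition from every location and carrying a $\Bsymbol$-labeled self-loop, exactly as in Figure~\ref{fig:reductionwb}. This guarantees that for every finite $\rho'$ the extension $\rho' \conc{} (\Bsymbol,0)(\Bsymbol,1)(\Bsymbol,2)\cdots$ is accepted, so $\monitoring{L(\autp)}(\rho'' ) \neq \bot$ for every finite $\rho''$. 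Consequently bounded weak monitorability of $L(\autp)$ collapses to bounded weak $\top$-monitorability of $L(\autp)$, since no $\bot$-witness can ever exist.

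The remaining step is to check that this transformation preserves the relevant $\top$-behaviour at the chosen bound. As in Theorem~\ref{thm:topundec}, I would take the bound $n=0$: for $n=0$ the only candidate witness is $\rho' = \varepsilon$, and $\monitoring{L(\autp)}(\varepsilon,0) = \top$ holds if and only if $L(\autp) = \infwords_\Bsymbol$. Using the structure of $\autp$ (the $\Sigma$-relabeling between the two copies of $Q$ together with the $r$-gadget, whose $\top$-preserving correctness is established in the proof of Theorem~\ref{thm:undecmoni}), one verifies that $L(\autp) = \infwords_\Bsymbol$ holds precisely when $\aut$ is weakly $\top$-monitorable at the empty observation, which is equivalent to universality of $\aut$. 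Since the $\bot$-verdict is unreachable, $0$-bounded weak monitorability of $L(\autp)$ is then equivalent to $0$-bounded weak $\top$-monitorability, and hence to universality of $\aut$.

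I expect the main obstacle to be the bookkeeping in the correctness argument rather than the overall strategy: one must argue carefully that suppressing $\bot$ does not accidentally create or destroy $\top$-witnesses, and in particular that at the fixed bound the equivalence with universality is exact. The cleanest route is to reuse verbatim the $\autp$-construction and the two-directional simulation already proved for the weak case of Theorem~\ref{thm:undecmoni}, observing additionally that the $r$-gadget rules out the verdict~$\bot$; then the reduction to the undecidable universality problem for TBA~\cite{alur1994tba} goes through and establishes that bounded weak monitorability is undecidable for properties given by TBA.
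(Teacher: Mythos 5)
Your proposal takes essentially the same route as the paper: it reuses the two-copies-plus-$r$ construction from the second step of the proof of Theorem~\ref{thm:undecmoni} to make the verdict~$\bot$ unreachable while preserving $\top$-witnesses, thereby reducing bounded weak $\top$-monitorability (at bound $n=0$, equivalently TBA universality) to bounded weak monitorability. The paper states this as the observation that stripping $\Bsymbol$'s only shortens witnesses, so the reduction respects any bound~$n$; your specialization to $n=0$ is the same argument and is correct.
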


\begin{proof}
In the second step of the undecidability proof of weak monitorability (see Theorem~\ref{thm:undecmoni}), we have shown how to reduce weak $\top$-monitorability to weak monitorability. 
The same construction reduces bounded weak $\top$-monitorability to bounded weak monitorability, as the length of the witnesses is only decreased by removing $\Bsymbol$'s.
\end{proof}

Due to Remark~\ref{rem_properties}, we also obtain results for the remaining cases.

\begin{corollary}
Bounded weak $\top$-$(\rho, t)$-monitorability and bounded weak $(\rho, t)$-monitorability are undecidable (even for fixed $(\rho, t)$) while bounded weak $\bot$-monitorability is decidable.
\end{corollary}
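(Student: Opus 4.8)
The plan is to read the corollary as three separate consequences of the three theorems just proved, each obtained by relating an observation-parametrized decision problem to its unqualified special case, namely the instance in which the observation is fixed to~$(\varepsilon,0)$. Recall from Definition~\ref{def:wbmonitorability} that the unqualified notions of \emph{bounded weak $\top$-monitorability}, \emph{bounded weak $\bot$-monitorability}, and \emph{bounded weak monitorability} are, by definition, precisely their $(\varepsilon,0)$-variants, i.e., the corresponding $(\rho,t)$-notions instantiated at the observation~$(\varepsilon,0)$. Since $(\varepsilon,0)$ is a legal observation ($0 \geq \tau(\varepsilon) = 0$, and all timepoints are rational), each unqualified problem is literally the corresponding $(\rho,t)$-problem restricted to this single fixed input.

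For the two undecidability claims I would argue as follows. The decision problem \myquot{given a TBA~$\aut$, an observation~$(\rho,t)$, and a bound~$n$, is $\lang(\aut)$ bounded weakly $\top$-$(\rho,t)$-monitorable with respect to~$n$?} contains, as its sub-family of instances with $(\rho,t)=(\varepsilon,0)$, exactly the problem shown undecidable in Theorem~\ref{thm:topundec}. A decision procedure for the former would therefore decide the latter, so bounded weak $\top$-$(\rho,t)$-monitorability is undecidable, and already so when the observation is fixed to~$(\varepsilon,0)$; this establishes the parenthetical \myquot{even for fixed $(\rho,t)$}. The identical reasoning with Theorem~\ref{thm:undecweakbounded} in place of Theorem~\ref{thm:topundec} gives undecidability of bounded weak $(\rho,t)$-monitorability, again even for the fixed observation~$(\varepsilon,0)$.

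For the decidability claim I would use the same inclusion in the opposite direction: bounded weak $\bot$-monitorability is the $(\varepsilon,0)$-instance of bounded weak $\bot$-$(\rho,t)$-monitorability, which is decidable by Theorem~\ref{thm:botdec}. Running that decision procedure on the input~$(\aut,(\varepsilon,0),n)$ therefore decides bounded weak $\bot$-monitorability.

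I do not expect a serious obstacle, as all of the substantive work resides in Theorems~\ref{thm:topundec}, \ref{thm:undecweakbounded}, and~\ref{thm:botdec}; the only care required is to keep the two directions straight, with undecidability flowing upward from the restricted $(\varepsilon,0)$-problem to the parametrized one and decidability flowing downward from the parametrized problem to its restriction. The one point worth flagging is that, in contrast to the unbounded setting of Remark~\ref{rem_properties}, bounded monitorability is not prefix-monotone in the observation, so here \myquot{even for fixed $(\rho,t)$} is justified by hardness at the specific fixed observation~$(\varepsilon,0)$ rather than by a monotone transfer to all observations.
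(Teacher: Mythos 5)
Your proposal is correct and matches the paper's (essentially one-line) justification: undecidability lifts from the $(\varepsilon,0)$-instances established in Theorems~\ref{thm:topundec} and~\ref{thm:undecweakbounded} to the observation-parametrized problems, and decidability restricts from Theorem~\ref{thm:botdec} to the $(\varepsilon,0)$-instance. Your closing remark is a fair point of care: the paper nominally cites Remark~\ref{rem_properties}, but the argument really is the plain specialization/generalization of decision problems that you spell out, since bounded monitorability need not be prefix-monotone.
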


\section{Refined Monitoring with Time-Horizon Verdicts}
\label{sec:prediction}

In the previous section, we have shown that the uninformative verdict~$\unknown$ can be refined by checking whether within a bounded number of new observations, a definitive verdict may be given.
In this section, we again refine the uninformative verdict~$\unknown$ by computing lower bounds on the time that needs to pass (independently of the number of events observed during that time) before a definitive verdict may be given. 
Again, if this is infinite, then the monitoring process can be stopped, as no amount of waiting will yield a definitive verdict.
This refinement was introduced and briefly studied as ``time predictive monitoring'' by Grosen et al.~\cite{DBLP:conf/formats/GrosenKLZ22}. 
Here we revisit (and rename) this notion by proving computability of the time-bounded monitorability for properties given by a DTMA.

Recall that time passing without any new observed events can nevertheless yield definitive verdicts (see, e.g., the last two items in Example~\ref{example_monitoring}).
Thus, in a practical setting, one is interested in intermittently querying the monitoring function even if no events are observed. 
Here, we give lower bounds on the time one should let pass before the next such query is made, thereby reducing the computational overhead of these queries.

\begin{definition}[Refined Monitoring with Time-Horizon Verdicts]
\label{def_refinedmonitoring}
Given an observation~$(\rho,t)$ and a property~$\varphi \subseteq \infwords$, the refined monitoring function $\predmonitoring{\varphi}$ is defined as 
\[
\predmonitoring{\varphi}(\rho,t) = 
\begin{cases}
\top & \text{if } \monitoring{\varphi}(\rho,t)=\top,\\
\bot & \text{if } \monitoring{\varphi}(\rho,t)=\bot,\\
( \pospredmonitoring{\varphi}(\rho,t), \negpredmonitoring{\varphi}(\rho,t)) & \text{otherwise} ,\\
\end{cases}
\] with
\begin{itemize}
    \item $\pospredmonitoring{\varphi}(\rho,t) = \inf\big\{ t' \mid \rho' \in \finwords \text{ such that } \monitoring{\varphi}(\rho\conc{t} \rho', t') = \top \text{ and } t' \ge \tau(\rho\conc{t} \rho')  \big\}$ and
    \item $\negpredmonitoring{\varphi}(\rho,t) = \inf\big\{ t' \mid \rho' \in \finwords \text{ such that } \monitoring{\varphi}(\rho\conc{t} \rho', t') = \bot \text{ and } t' \ge \tau(\rho\conc{t} \rho') \big\},$
\end{itemize}
where we use the convention~$\inf \emptyset = \infty$.
\end{definition}

\begin{example}
Consider the MITL property~$\varphi = F_{[20,40]} b$. Monitoring the finite timed word~$\rho=(a,5.1) (c,21.0) (c, 30.4) (b, 35.1) (a, 40.2)$ will result in three $\unknown$ verdicts followed by the verdict $\top$ when $(b,35.1)$ is read.  However,  we may offer significantly more information, e.g., when reading $(a, 5.1)$ it is clear that at least $14.9$ time-units must elapse before we can give the verdict~$\top$, and at least $34.9$ time-units must elapse before we can give the verdict~$\bot$.  
Hence, $\predmonitoring{\varphi} ((a, 5.1),5.1) = (14.9,34.9)$.
\end{example}

\begin{remark}
$\monitoring{\varphi}(\rho, t) = \top$ implies $\pospredmonitoring{\varphi}(\rho,t) = 0$ and $\negpredmonitoring{\varphi}(\rho,t) = \infty$ and $\monitoring{\varphi}(\rho, t) = \bot$ implies $\pospredmonitoring{\varphi}(\rho,t) = \infty$ and $\negpredmonitoring{\varphi}(\rho,t) = 0$, but the converse is in general not true. 
Consider, for example, the MITL property~$\varphi = F_{\geq 0} a$ and the observation~$\rho=(b,1)$. 
Then, $\pospredmonitoring{\varphi }(\rho,1)=0 $ (witnessed by $\rho' = (a,0)$ for which we have $\monitoring{\rho \conc{t} \rho'} = \top$) and $\negpredmonitoring{\varphi} = \infty$, but $\monitoring{\varphi}(\rho,1) = \unknown$.
\end{remark}

Time-bounded monitorability for an observation~$(\rho, t)$ refines weak $(\rho, t)$-monitorability.

\begin{lemma}
\label{connection}
Let $(\rho, t)$ be an observation and $\varphi$ a property. Then, $\varphi$ is weakly $(\rho,t)$-monitorable if and only if at least one of the values~$\pospredmonitoring{\varphi}(\rho,t) $ and $\negpredmonitoring{\varphi}(\rho,t) $ is finite.
\end{lemma}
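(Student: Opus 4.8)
The plan is to prove the biconditional by directly unfolding both definitions, after isolating two elementary monotonicity properties of $\monitoring{\varphi}$. Both directions reduce to the same correspondence between extensions, and by symmetry of the definitions of $\pospredmonitoring{\varphi}$ and $\negpredmonitoring{\varphi}$ it suffices to treat the verdict~$\top$, the case of $\bot$ being identical with $\top$ replaced by $\bot$ throughout.

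The forward direction is immediate. If $\varphi$ is weakly $(\rho,t)$-monitorable, there is some $\rho'' \in \finwords$ with $\monitoring{\varphi}(\rho \conc{t} \rho'') \in \{\top,\bot\}$. Recalling that this shorthand evaluates the verdict at the timepoint $t' := \tau(\rho \conc{t} \rho'')$, the pair $(\rho'', t')$ witnesses that $t'$ lies in the set defining $\pospredmonitoring{\varphi}(\rho,t)$ (if the verdict is $\top$) or $\negpredmonitoring{\varphi}(\rho,t)$ (if it is $\bot$), since $t' = \tau(\rho \conc{t} \rho'') \ge \tau(\rho \conc{t} \rho'')$ holds trivially. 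Hence the corresponding infimum is at most $t'$, so at least one of the two values is finite.

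For the backward direction, assume w.l.o.g.\ that $\pospredmonitoring{\varphi}(\rho,t)$ is finite. Since $\inf \emptyset = \infty$, the defining set is nonempty, so there exist $\rho' \in \finwords$ and a timepoint~$t' \ge \tau(\rho \conc{t} \rho')$ with $\monitoring{\varphi}(\rho \conc{t} \rho', t') = \top$. The obstacle, which I view as the crux of the proof, is that weak monitorability only permits querying the monitor at the \emph{default} timepoint (the time of the last observed event), whereas the witness $(\rho', t')$ exploits a possibly strictly larger current time~$t'$ at which no event occurs. I would bridge this gap by establishing two facts, both via the extension correspondence $\rho \conc{t} \mu = \rho \conc{s} \mu'$ where $\mu'$ shifts $\mu$ forward by $t-s$. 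First, verdicts persist under time passing: for $s' \ge s \ge \tau(w)$, every extension of $w$ at $s'$ is also an extension of $w$ at $s$, so $\monitoring{\varphi}(w,s) = \top$ implies $\monitoring{\varphi}(w,s') = \top$. Second, appending a dummy event preserves a $\top$-verdict: for any letter~$a$ and $s \ge \tau(w)$, each extension $(w \conc{s}(a,0)) \conc{s} \nu$ of $w \conc{s}(a,0)$ at $s$ equals $w \conc{s} \mu$ with $\mu = (a,0) \conc{} \nu$, so $\monitoring{\varphi}(w,s) = \top$ yields $\monitoring{\varphi}(w \conc{s}(a,0)) = \top$, where the default query time $\tau(w \conc{s}(a,0)) = s$ now coincides with $s$.

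Applying these facts to $w := \rho \conc{t} \rho'$ and $s := t'$ completes the argument. Using the first fact I may assume $t' \ge t$: if instead $t' < t$, then $\rho'$ must be empty (otherwise $t' \ge \tau(w) = t + \tau(\rho') \ge t$), so $w = \rho$ with $\tau(\rho) \le t$, and monotonicity raises the witnessing timepoint from $t'$ to $t$. With $t' \ge t$, the word $\rho'' := \rho' \conc{t'-t} (a,0)$ is legal (the required inequality $t'-t \ge \tau(\rho')$ follows from $t' \ge \tau(w)$) and satisfies $\rho \conc{t} \rho'' = (\rho \conc{t} \rho') \conc{t'} (a,0)$, whose default query time is~$t'$ and whose verdict is~$\top$ by the second fact. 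This $\rho''$ witnesses weak $(\rho,t)$-monitorability. I expect the only genuinely delicate point to be exactly this reconciliation of the free waiting time~$t'$ with the last-event timepoint of the definition, in particular the boundary case of an empty $\rho'$ with $t' < t$, which is handled by the monotonicity fact rather than by the dummy-event construction alone.
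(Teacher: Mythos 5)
Your proof is correct and follows the same overall route as the paper's: both directions are obtained by directly unfolding the definitions of weak $(\rho,t)$-monitorability and of $\pospredmonitoring{\varphi}$, $\negpredmonitoring{\varphi}$, with the forward direction immediate. The only substantive difference is in the backward direction, which the paper dispatches in one sentence ---``if $\pospredmonitoring{\varphi}(\rho,t)$ is finite, then there is a $\rho'$ such that $\monitoring{\varphi}(\rho\conc{t}\rho')=\top$''--- thereby silently identifying a witness $(\rho',t')$ with $t'\ge\tau(\rho\conc{t}\rho')$ with one queried at the default timepoint $\tau(\rho\conc{t}\rho')$; your two bridging facts (persistence of a $\top$-verdict as time passes, and appending a dummy event at time $t'$ so that the default query time becomes $t'$), together with the treatment of the boundary case of an empty $\rho'$ with $t'<t$, make that identification explicit and correctly close the gap the paper leaves implicit.
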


\begin{proof}
Let $\varphi$ be weakly $(\rho,t)$-monitorable, i.e., there is a $\rho' \in \finwords$ such that $\monitoring{\varphi}(\rho\conc{t}\rho') \in \{\top, \bot\}$, say it is $\top$ (the other case is analogous).
Then, $\rho'$ witnesses $\pospredmonitoring{\varphi}(\rho,t) \le \tau(\rho')$, i.e., $\pospredmonitoring{\varphi}(\rho,t)$ is finite.
On the other hand, if (say) $\pospredmonitoring{\varphi}(\rho,t)$ is finite (the other case is analogous), then there is a $\rho'$ such that $\monitoring{\varphi}(\rho\conc{t} \rho') = \top$.
Hence, $\varphi$ is weakly $(\rho,t)$-monitorable.
\end{proof}

\begin{theorem}[Refined Monitoring is Effective]
\label{thm_refined}
$\predmonitoring{\varphi}$ is effectively computable, if $\varphi$ is given by a DTMA.
\end{theorem}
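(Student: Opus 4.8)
The plan is to reduce the computation of $\predmonitoring{\varphi}$ to two instances of a single timed-reachability problem. Since $\varphi$ is given by a DTMA, the verdict $\monitoring{\varphi}(\rho,t)$ is itself effectively computable, so we can decide which of the three cases of Definition~\ref{def_refinedmonitoring} applies and, in the two conclusive cases, simply return $\top$ or $\bot$. It therefore remains to compute $\pospredmonitoring{\varphi}(\rho,t)$ and $\negpredmonitoring{\varphi}(\rho,t)$ in the inconclusive case. These are dual: from $\monitoring{\varphi}(\rho,t)=\top$ iff $\monitoring{\comp\varphi}(\rho,t)=\bot$ we get $\pospredmonitoring{\varphi}(\rho,t)=\negpredmonitoring{\comp\varphi}(\rho,t)$, and since DTMA are complementable by only changing the acceptance condition, it suffices to show how to compute $\negpredmonitoring{\psi}(\rho,t)$ for an arbitrary DTMA-property~$\psi$; applying this to $\psi=\varphi$ and to $\psi=\comp\varphi$ yields both values.

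Second, I would reformulate $\negpredmonitoring{\varphi}(\rho,t)$ as a minimum-time reachability problem. As in the decidability proof for monitorability, I assume the DTMA~$\aut$ for $\varphi$ is complete, so that $\reachset\aut(\rho,t)$ is the single state~$s_0$ reached after reading $\rho$ and elapsing time until $t$, and recall that $\monitoring{L(\aut)}(\rho\conc{t}\rho',t')=\bot$ holds exactly when the state reached from $s_0$ by reading $\rho'$ and letting time pass until $t'$ lies in $\emptystates{\aut}=\comp{\nestates{\aut}}$, which is computable by Lemma~\ref{lemma:tmaalgorithms}. Hence $\negpredmonitoring{\varphi}(\rho,t)$ is the least global query time~$t'$, i.e. $t$ plus the least additional elapsed time, at which $\emptystates{\aut}$ can be reached from $s_0$ by reading events and letting time pass. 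A first, purely qualitative, check uses the tools already available: if $\reachset\aut(\rho,t)\notin\textit{Pre}^*_\aut(\emptystates{\aut})$, then $\emptystates{\aut}$ is unreachable and $\negpredmonitoring{\varphi}(\rho,t)=\infty$; otherwise the value is finite.

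To obtain the actual value, I would track elapsed time explicitly: add a fresh clock~$z$ to $\aut$ that is never reset and never occurs in any guard (so it changes neither the language nor the transition structure), start the forward symbolic exploration from $(s_0,z=0)$, and compute the minimum value of $z$ over all reachable zones meeting the lifted target~$\emptystates{\aut}$; this minimum, offset by $t$, is the sought infimum. Since the automaton model has no location invariants, $\emptystates{\aut}$ is closed under both letting time pass and taking transitions, so the search may stop as soon as a symbolic state contained in $\emptystates{\aut}$ is encountered. The main obstacle is that the time-tracking clock~$z$ is unbounded, so naive zone-based forward reachability need not terminate, and one must additionally read off the exact infimum together with the information whether it is attained (i.e. whether the relevant zone bound is strict). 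I would resolve this by the standard argument for minimum-time reachability in timed automata: a time-minimal witness can be chosen whose untimed region path does not repeat a region, which bounds the relevant value of~$z$ by a computable constant; capping $z$ at this bound makes the explored zone graph finite, and the extremal value of $z$ over a union of DBM-represented zones, including whether it is a minimum or a strict infimum, is directly readable from the difference-bound entries. Feeding the resulting values $\negpredmonitoring{\varphi}(\rho,t)$ and $\pospredmonitoring{\varphi}(\rho,t)=\negpredmonitoring{\comp\varphi}(\rho,t)$ into Definition~\ref{def_refinedmonitoring} then gives the effective computation of $\predmonitoring{\varphi}$.
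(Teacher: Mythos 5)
Your reduction is the same as the paper's: decide the conclusive cases directly, observe that $\pospredmonitoring{\varphi}$ is $\negpredmonitoring{\comp\varphi}$ for the complement DTMA (obtained by changing only the acceptance condition), and reformulate $\negpredmonitoring{\varphi}(\rho,t)$ as a minimum-time reachability problem from $\reachset{\aut}(\rho,t)$ to $\emptystates{\aut}$. Where you diverge is in how that last subproblem is solved: the paper outsources it to Asarin and Maler's time-optimal reachability/strategy synthesis for timed game automata (specialized to the one-player case), whereas you give a self-contained construction via an extra never-reset clock~$z$, a bound on $z$ extracted from a region-path argument, and a read-off of the extremal bound (and its strictness) from the DBMs. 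This buys a more explicit algorithm and an explicit account of whether the infimum is attained, at the cost of one step that needs more care than you give it: excising a repeated region from a run is not literally possible, since the two visits to the region may carry different concrete valuations, so the claim that a time-minimal witness follows a simple region path has to be routed through the corner-point abstraction (or the classical Courcoubetis--Yannakakis minimum-delay argument) rather than by cutting the run directly. That is a standard repair, and your qualitative pre-check via $\textit{Pre}^*_\aut(\emptystates{\aut})$ for the value~$\infty$, as well as the observation that $\emptystates{\aut}$ is closed under delays and transitions, are both correct; overall the proposal is sound and, modulo that one citation-sized gap, somewhat more detailed than the paper's own proof.
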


\begin{proof}
When online-monitoring $\varphi$ over some observation~$(\rho , t)$, we compute the reach-set~$\reachset{\aut}(\rho, t)$ and check if it is a subset of the non-empty language states~$\nestates{\aut}$ or a subset of the empty language states~$\emptystates{\aut}$~\cite{DBLP:conf/formats/GrosenKLZ22}. 
Thus, $\negpredmonitoring{\varphi}(\rho,t)$ is the infimum of the time duration of all paths from $\reachset{\aut}(\rho,t)$ to $\emptystates{\aut}$.
This is a time-optimal reachability problem.

Asarin and Maler~\cite{DBLP:conf/hybrid/AsarinM99} introduced the concept of time-optimal strategies for timed game automata.
Since timed game automata trivially generalize timed automata, we can adapt the computation of the optimal time bound to obtain the minimal possible time to reach an empty language state from the reach-set. This gives us the value~$\negpredmonitoring{\aut}(\rho,t)$.

Since DTMA are closed under complement, the same procedure for the complement automaton gives us the value~$\pospredmonitoring{\aut}(\rho,t)$.
\end{proof}

As explained in the proof of Theorem~\ref{thm_refined}, monitoring is implemented by keeping track of the reach-set of the observation and checking at each update whether it is contained in the non-empty language states or in the empty language states.
But, as explained in the introduction of this section, one should not only update the reach-set when a new event is observed, but also intermittently when time has passed. 


For this special case where only time passes, which is covered in Definition~\ref{def_refinedmonitoring} by considering $\rho' = \varepsilon$ in the infimum, 
we do not need to solve the expensive time-optimal reachability problem described in the proof of Theorem~\ref{thm_refined}. Instead, we rely on zone operations to compute 
$\delta_\varphi(\rho,t) = \inf\{d \mid \monitoring{\varphi}(\rho,t+d) \in \{\top, \bot\}\}$. 
This is done by exploring all delays (removing the upper bounds of the zones) and subtracting the non-empty language states. The lower bounds in the zones of the resulting states gives the minimum time a verdict can be made by waiting.
Now, having observed $\rho$ and the current time being $t$ such that $\monitoring{\varphi}(\rho, t) = \unknown$, querying the monitoring function again without a new observation before time $t + \delta_\varphi(\rho,t) $ will not yield a different verdict and can thus be avoided.

 \section{Related Work}
 \label{sec:related}
A formal notion of monitorability was first introduced by Pnueli and Zaks in their work on monitoring Property Specification Language (PSL)~\cite{pnueli2006psl}.
In that work, the authors defined strong monitorability given a finite prefix, called $\sigma$-monitorability, on which we base our $(\rho,t)$-monitorability.
From this, Bauer, Leuker, and Schallhart defined the most common definition of monitorability, that is strong $\sigma$-monitorability from the initial state~\cite{bauer2011runtime}.
They proved that safety and guarantee properties are a proper subset of the class of strongly monitorable properties.
Later, Chen et al.~\cite{chen2018deciding} and Peled and Havelund~\cite{peled2019refining} noticed that there exist properties that are not strongly monitorable but still have utility to monitor, and proposed equivalent definitions of weak monitorability.
Mascle \etal showed the monitorability of LTL properties can be improved by considering robust semantics~\cite{rltl}.
The term \emph{strong monitorability} has been used before in the context of partially observable stochastic systems modeled as Hidden Markov Models.
Sistla, \v{Z}efran, and Feng first used the term, contrasted with (standard) monitorability~\cite{sistla2011monitorability} in their work extending the results of Gondi, Patel, and Sistla on monitoring $\omega$-regular properties of stochastic systems~\cite{gondi2009monitoring}.

The complexity of monitorability problems has been studied in other untimed settings.
Diekert and Leuker proposed a topological definition of strong monitorability, showing that problem is equivalent to showing that the boundary in the Cantor topology has an empty interior~\cite{diekert2014topology}.
Diekert, Muscholl, and Walukiewicz later proved that deciding monitorability for (untimed) Büchi automata is \pspace-complete~\cite{diekert2015note}.
Agrawal and Bonakdarpour proposed a definition of monitorability for hyperproperties and determined the monitorable classes for their three-valued specification language, HyperLTL~\cite{clarkson2014hyperltl}.
Francalanza, Aceto, and Ingolfsdottir characterized monitorable properties of the branching-time $\mu$-Hennessy-Milner Logic~\cite{francalanza2017monitorability}.
This was later extended by Aceto \etal who introduced a hierarchy of monitorable fragments of the language~\cite{aceto2019adventures}.

Attempts have been made to unify these different notions of monitorability.
Peled and Havelund introduced a classification for properties centered around monitorability~\cite{peled2019refining}.
Kauffman, Havelund, and Fischmeister defined a common notation for strong and weak monitorability for different verdict domains~\cite{kauffman2021what}.
Aceto, Achilleos, and Francalanza provided syntactic characterizations of monitorability for classical notions of monitorability as well as for a variant of the modal $\mu$-calculus, recHML~\cite{aceto2021operational}.

We are aware of only one other work addressing monitorability for real-time properties.
Amara~\etal (very recently) introduced a new linear-time timed $\mu$-calculus, which subsumes MTL, and therefore also MITL, and identified its largest monitorable fragment~\cite{amara:hal-05043055}.
Their work differs from ours in three important respects.
First, they rely on a definition of monitorability introduced by Schneider as \emph{Execution Monitoring Enforceability}~\cite{schneider2000enfoceable} while we use the more typical definition of strong and weak monitorability due to Pnueli and Zaks~\cite{pnueli2006psl}.
Second, they introduce a new calculus and characterize its maximal monitorable fragment.
We instead consider the case of properties expressed as Timed Automata, which lend themselves to algorithmic manipulation.
Finally, we prove decidability and undecidability results. 
\section{Conclusion}
\label{sec:conclusion}

In this work, we have studied monitorability for timed properties specified by either (possibly nondeterministic) TBA or by deterministic TMA.
In general, we proved  monitorability decidable for specifications given by deterministic automata and undecidable for specifications given by nondeterministic automata.
The notable exception here is bounded weak $\bot$-monitorability, which is even decidable for nondeterministic TBA.

Also, we provided refinements of monitoring and monitorability making the verdict~$\unknown$ more informative by providing bounds on the number of events or the amount of time that needs to pass before a conclusive verdict may be given.
In practical settings, this is crucial information and also allows to optimize the monitoring process in the real-time setting.

Our decidability proof for monitorability of DTMA relies on the fact that DTMA can be complemented without changing the state space.
On the other hand, monitorability is undecidable if the property is given by a TBA. 
Thus, another question for further research is to consider strong monitorability when given TBA for the property \emph{and} for its negation.
This is a very natural setting, as the specification logic~MITL is closed under negation and can be translated into TBA. 
Also, monitoring often requires TBA for both the property and its negation~\cite{DBLP:conf/formats/GrosenKLZ22, delay,assumptions}. 

\textbf{Acknowledgements.} We would like to thank Corto Mascle, who brought Rampersad et al.'s work on the complexity of suffix-universality~\cite{suffix} to our attention, which inspired our undecidability proof for weak monitorability.

\bibliographystyle{plain}
\bibliography{bib}

\end{document}